\theoremstyle{plain}
\newtheorem{thm}{Theorem}
\theoremstyle{definition}
\newcommand{\eq}[1]{(\hyperref[eq:#1]{\ref*{eq:#1}})}
\renewcommand{\sec}[1]{\hyperref[sec:#1]{Section~\ref*{sec:#1}}}
\newcommand{\thrm}[1]{\hyperref[thm:#1]{Theorem~\ref*{thm:#1}}}
\newcommand{\lemm}[1]{\hyperref[lemm:#1]{Lemma~\ref*{lemm:#1}}}
\newcommand{\prop}[1]{\hyperref[prop:#1]{Proposition~\ref*{prop:#1}}}
\newcommand{\corr}[1]{\hyperref[corr:#1]{Corollary~\ref*{corr:#1}}}
\newcommand{\fig}[1]{\hyperref[fig:#1]{\ref*{fig:#1}}}
\newcommand{\tbl}[1]{\hyperref[tbl:#1]{\ref*{tbl:#1}}}
\newcommand{\oo}{\mathcal{O}}
\DeclareMathAlphabet{\matheu}{U}{eus}{m}{n}
\newcommand{\I}{{\mathbb I}}
\newcolumntype{L}[1]{>{\raggedright}p{#1}}
\newcolumntype{C}[1]{>{\centering}p{#1}}
\newcolumntype{R}[1]{>{\raggedleft}p{#1}}
\newcolumntype{D}{>{\centering\arraybackslash}X}
\newcommand{\ket}[1]{{\left\vert{#1}\right\rangle}}
\begin{document}
\title{Error rates and resource overheads of encoded three-qubit gates}
\author{Ryuji Takagi, Theodore J. Yoder and Isaac L. Chuang}
\affiliation{Department of Physics, Massachusetts Institute of Technology, 77 Massachusetts Avenue, Cambridge, Massachusetts 02139, USA}
\date{\today}

\begin{abstract}
A non-Clifford gate is required for universal quantum computation, and, typically, this is the most error-prone and resource intensive logical operation on an error-correcting code. Small, single-qubit rotations are popular choices for this non-Clifford gate, but certain three-qubit gates, such as Toffoli or controlled-controlled-$Z$ (CCZ), are equivalent options that are also more suited for implementing some quantum algorithms, for instance, those with coherent classical subroutines. 
Here, we calculate error rates and resource overheads for implementing logical CCZ with pieceable fault-tolerance, a non-transversal method for implementing logical gates. 
We provide a comparison with a non-local magic-state scheme on a concatenated code and a local magic-state scheme on the surface code. We find the pieceable fault-tolerance scheme particularly advantaged over magic states on concatenated codes and in certain regimes over magic states on the surface code.
Our results suggest that pieceable fault-tolerance is a promising candidate for fault-tolerance in a near-future quantum computer.
\end{abstract}

\maketitle

\section{Introduction}
Quantum error-correcting codes are the most promising route to scalable quantum computation. However, some of their limitations are well-known. For instance, a major problem is that a single code cannot support a full set of universal, transversal operations \cite{Zeng2007,Chen2008,Eastin2009a}. Often, and always for 2D designs \cite{Bravyi2013a}, the missing gate is not in the normalizer of the Pauli group; that is, it is non-Clifford. 

The techniques of gate-teleportation \cite{Gottesman1999b} and magic-states \cite{Bravyi2005a} can overcome the lack of a non-Clifford gate. Different magic-states can be created to implement small $Z$-rotations such as the $T$-gate or 3-qubit operations, like Toffoli or controlled-controlled-$Z$ (CCZ). 
However, the process to create a magic-state occurs post-selectively and recursively and leads to large resource overheads. Although improving consistently \cite{Jones2013a,Campbell2017a} approaching believed fundamental limits \cite{Bravyi2012a}, large resource demands remain a serious obstacle for near-future architectures.

Certain other approaches exist in the literature for implementing a universal gate-set while circumventing the use of magic-states. A popular approach is gauge-fixing \cite{Paetznick2013a,Anderson2014a,Bombin2015a}, in which a subsystem code can implement complementary sets of transversal logical gates depending on the settings of the gauge qubits. Another approach \cite{Jochym-OConnor2014a,Nikahd2016a,Nikahd2016b} concatenates different codes with complementary transversal gate sets to achieve the same effect in one larger code. Recently, this approach was shown to lead to asymptotic thresholds around $\sim10^{-3}$ albeit using more physical qubits than, for example, surface code magic-state distillation \cite{Chamberland2016d,Chamberland2017}.

Any fault-tolerant, universal computing scheme operating without magic states is expected to be a promising candidate for near-future architectures where fairly accurate physical components are supplied but space-time resources, like qubit count and circuit depth, are limited. 
The primary goal in this near-future regime is to achieve some desired target error rate after a finite-sized computation with small resource overheads.
Such constraints imply that the logical error rates of encoded gates and the first-level pseudothreshold \cite{Svore2005b} (called just pseudothreshold hereafter) are more important measures than asymptotic threshold, which only becomes meaningful with access to huge amounts of resources. 

To evaluate near-future fault-tolerant computation, we focus on another magic-less alternative that allows for a logical implementation of three-qubit gates, the pieceable fault-tolerance scheme \cite{Yoder2016c}. In this approach, a logical gate is done non-transversally through the ``round-robin" construction, and made fault-tolerant via partial error-correction performed throughout the circuit.
This construction has recently been used in \cite{Chao2017a} to perform fault-tolerant, universal computing on seven logical qubits requiring only four ancillary qubits and 15 code qubits. The circuit volume metric, a space-time resource measure that counts all gates weighted by the number of qubits involved, was used in \cite{Yoder2016c} to argue that pieceable fault-tolerance reduces logical gate overhead by nearly a factor of two over magic-state creation and injection. However, little was said about error rates of pieceable gates.

In this paper, we calculate these error rates and compare to magic-state schemes for implementing three-qubit non-Clifford gates. Our contenders are (1) a non-local magic-state scheme: magic-states created postselectively on Steane's 7-qubit code (also known as the smallest color code), (2) a local magic-state scheme: surface code magic-state distillation, and (3) pieceable fault-tolerance on the (a) 5-qubit \cite{Yoder2016c}, (b) 7-qubit \cite{Yoder2016c}, (c) $3\times3$ Bacon-Shor \cite{Yodera}, and (d) $3\times9$ Bacon-Shor \cite{Yodera} codes. Our metrics are (I) error rate of the logical gate and (II) circuit volume. Among concatenated schemes (1) and (3), we can definitively declare pieceable $3\times3$ Bacon-Shor the winner in both metrics (I) and (II). When comparing to (2), the picture is more complicated and interesting.
The pieceable $3\times3$ Bacon-Shor beats the surface code in error rate at low distance and in circuit volume when the physical error rate is sufficiently low compared with the desired target logical error. 
On the other hand, asymptotically in code distance, the surface code outperforms pieceable $3\times 3$ Bacon-Shor due to better scaling of logical error rate and volume with distance.

\section{Methods} 
We first describe our method to evaluate the logical error rates. Evaluating the surface code scheme (2) draws on the extensive literature on the topic \cite{Fowler2012a}. 
Our calculations of the logical error rates of schemes (1) and (3) at code distance $d=3$ are done by exact enumeration of all combinations of up to two faults in the circuit extended-rectangle (exREC) \cite{Aliferis2005a} under the standard depolarizing noise model (which serves as a model of average-case noise).
In \cite{Aliferis2005a}, a rigorous upper bound on the logical error rate under depolarizing noise is given.
In contrast, we provide formulas giving a rigorous lower bound as well as a tighter rigorous upper bound.
The lower and upper bounds on logical error rate also determine lower and upper bounds on the pseudothreshold. Having both bounds allows us to definitively prove a separation between two different schemes when it exists.
Our method also confers some advantages over a Monte Carlo simulation. First, we can rigorously verify our circuits are fault-tolerant under the chosen noise model by checking that all single faults are correctable. Second, once the counting is complete, we can independently vary noise for each type of gate.

Our standard noise breakdown assigns single-qubit gates, two-qubit gates, and three-qubit gates each their own failure probabilities $p_1,p_2,$ and $p_3$, respectively. 
In the circuit depolarization noise model, an $r$-qubit gate fails with one of the $4^r-1$ $r$-qubit Pauli errors with probability $p_r/(4^r-1)$. In principle, preparation and measurement could be treated separately as well, though we will assign them failure probabilities also equal to $p_1$. Bounds on the error rate can always be written as polynomials in $p_1,p_2,p_3$ as we discuss below.

Our ultimate goal in error-rate estimation is to find the probability the exREC is incorrect given that all ancillas pass verification. Denote this $P_{\text{fail}|\text{acc}}=\text{Pr}\left[\text{fail}|\text{acc}\right]$. Our counting gives the exact values of
\begin{align}
P^{(2)}_{\text{fail},\text{acc}}&=\text{Pr}\left[\text{fail},\text{acc},\le2\text{ faults}\right],\label{eq:fail_acc}\\
P^{(2)}_{\text{succ},\text{acc}}&=\text{Pr}\left[\neg\text{fail},\text{acc},\le2\text{ faults}\right],\label{eq:succ_acc}\\
P^{(2)}_{\text{rej}}&=\text{Pr}\left[\neg\text{acc},\le2\text{ faults}\right],\label{eq:rej}
\end{align}
as polynomials in $p_1,p_2,p_3$ with degree equal to the number of potentially faulty components in the entire exREC.  These exactly calculated quantities are enough to bound $P_{\text{fail},\text{acc}}=\text{Pr}\left[\text{fail},\text{acc}\right]$, $P_{\text{succ},\text{acc}}=\text{Pr}\left[\neg\text{fail},\text{acc}\right]$, and $P_{\text{acc}}=\text{Pr}\left[\text{acc}\right]$ as
\begin{align}
P^{(2)}_{\text{fail},\text{acc}}&\le P_{\text{fail},\text{acc}},\\
P^{(2)}_{\text{succ},\text{acc}}&\le P_{\text{succ},\text{acc}},\\
P_{\text{acc}} &\le 1-P^{(2)}_{\text{rej}}.
\end{align}
Thus,
\begin{equation}
\frac{P^{(2)}_{\text{fail},\text{acc}}}{1-P^{(2)}_{\text{rej}}}\le P_{\text{fail}|\text{acc}}=1-P_{\text{succ}|\text{acc}}\le1-\frac{P^{(2)}_{\text{succ},\text{acc}}}{1-P^{(2)}_{\text{rej}}}.
\label{eq:bound}
\end{equation}
 More details on the simulation including the description on how to obtain these polynomials can be found in Appendix~\ref{app:sim_details}.

Next, we consider evaluating the resource overhead.
There exist various resource measures such as qubit count, circuit volume, gate counts and so on. The number of reusable physical qubits is often taken as a physical resource measure in the literature.
However, it is not the best, especially when we would like to compare resource overheads between different codes, because there is ambiguity that comes with the level of parallelization we assume.
In this paper, we mainly focus on circuit volume, a space-time resource measure that counts all gates weighted by the number of qubits involved. 
Unlike physical qubit count, circuit volume takes into account the trade-off between space and time resources. The circuit volume is a space-time metric in the same vein as the ``quantum volume'' \cite{qvolume}, except for evaluating specific circuits rather than a universal quantum computer.

The circuit volume at a high concatenation level is easy to compute using the volume of the logical construction at the first level of encoding. 
Let $V^{(k)}_G$ be the volume for implementing circuit component $G$ at the $k^{\text{th}}$ level of concatenation. 
Then, there is a recursion relation between two concatenation levels, $V^{(k+1)}_G=\sum_{G'} N_G^{G'} V^{(k)}_{G'}$ where $N_G^{G'}$ is the number of the circuit component $G'$ in the logical construction of component $G$.
We can understand this as evolution of a vector of circuit volumes of each component via a transformation matrix determined by the logical gate constructions. Namely, we get
\begin{equation}
 {\bf V^{(k)}}=A^k {\bf V^{(0)}},
 \label{eq:higher_concate}
\end{equation}
where $A$ is the matrix $A_{ij}=N_{G_i}^{G_j}$, ${\bf V^{(k)}}_i = V^{(k)}_{G_i}$, and $V^{(0)}_G$ is the volume of an unencoded component. 
We set ${\bf V^{(k)}}=(V^{(k)}_3,V^{(k)}_2,V^{(k)}_1,V^{(k)}_{prep},V^{(k)}_{meas})^T$, where the components refer to the circuit volume of three qubit gates, two qubit gates, single qubit gates, $\ket{0}$ or $\ket{+}$ preparation, and measurement respectively. 
Note that $(V^{(0)}_3,V^{(0)}_2,V^{(0)}_1,V^{(0)}_{prep},V^{(0)}_{meas})=(3,2,1,1,1)$.

\section{Logical constructions}\label{circuit_details}
Here, we describe the logical constructions used in the simulation. Explicit descriptions of the circuits at the gate level can be found in \cite{circuit}. All of our constructions begin with a round of syndrome measurement and recovery (the leading error correction) and end with the same (the trailing error correction), in accordance with the exREC formalism \cite{Aliferis2005a}. The rest of the circuit may also include rounds of error correction, called intermediate, in accordance with pieceable fault-tolerance \cite{Yoder2016c}.
 
For the 5-qubit code, we implement a logical CCZ gate by the round-robin construction \cite{Yoder2016c} with three intermediate error corrections. 
The leading error correction and trailing error correction are done by Steane's error correction \cite{Steane1997}.
Since the 5-qubit code is non-CSS, a 10-qubit ancilla is needed to extract the entire syndrome simultaneously. 
We actually find that the circuit in \cite{Steane1997} needs some modification for non-CSS codes, which we discuss in the Appendix~\ref{app:generalized_steane} in detail.
For intermediate error corrections, we use Shor-type error correction with CAT states \cite{Shor2011a}.
The size of the CAT states is always four for measuring constant stabilizers (those that commute with the preceeding circuitry), but it varies for measuring non-constant stabilizers because their weight changes as they go through the CCZ gates. 
For our circuit, we need to use 9-CAT, 13-CAT, 9-CAT at maximum for the first, second and third intermediate error correction respectively. 

For the 7-qubit code, we consider the construction that requires only one intermediate error correction \cite{Yoder2016c}.
All of the error corrections are done by Steane's error correction. 
Since the 7-qubit code is a CSS code, correction of $Z$ type errors can be done separately from that of $X$ type errors, and only the encoded states $\ket{\bar{0}}$ and $\ket{\bar{+}}$ are needed. The state
$\ket{\bar{0}}$($\ket{\bar{+}}$) is verified by applying CNOT gates transversally to another noisy $\ket{\bar{0}}$($\ket{\bar{+}}$) and measuring it transversally (a Steane ancilla factory \cite{Steane1998a}). 
If some error is detected, we discard the state and start again.
For estimating the circuit volume, we consider a more resource-efficient state preparation method proposed by Goto \cite{Goto2016}.
Although we did not estimate the logical error rate using the Goto's method, we suspect that the change in the logical error rate between different verification methods would be small as indicated in \cite{Goto2016}.
Since intermediate $Z$-type error correction is not needed, we just apply the $X$-type error correction in the middle and notify the trailing error correction about possible locations of $Z$-type errors as described in \cite{Yoder2016c}. 

Logical CCZ on the Bacon-Shor code is implemented as proposed in \cite{Yodera}. On the $3\times 3$ Bacon-Shor we need no intermediate correction although we do use a non-Pauli recovery at the end. Furthermore, since the ancilla for the error correction is a tensor product of 3-CAT states, there is no need for verification since, modulo its stabilizers, an error on a 3-CAT is equivalent to a weight one error.
In contrast, the $3\times 9$ Bacon-Shor implements logical CCZ transversally, but it comes with a substantially larger overhead \cite{Yodera}.

For the non-local magic-state scheme, we use magic state injection on the 7-qubit code to implement a logical CCZ gate.
The CCZ magic state is defined by the stabilizers $\left< X_1\text{CZ}(2,3), X_2\text{CZ}(1,3), X_3\text{CZ}(1,2)\right>$.
The protocol consists of two parts, a state preparation circuit and a teleportation circuit. 
The state preparation starts with the +1 eigenstate of the second and the third stabilizer, $\ket{\bar{0}}\ket{\bar{+}}\ket{\bar{+}}$, and measures the first stabilizer \cite{Zhou2000}. 
Our circuit is a variant of the circuit in \cite{Monroe2014} which we modify to create the CCZ state instead. Two measurements of $X_1\text{CZ}(2,3)$ are done with complete error-correction in between. This makes the circuit fault-tolerant (to one fault).
If the two measurement results do not match, we discard the created state and start over again. If they match and they both show the result -1, we apply $\bar{Z}$ on the first code block to put it back to the desired magic state. If both show +1, we do not need to apply a correction. Like the pieceable 7-qubit case, all the error corrections are done using Steane's method \cite{Steane1997}.

 \section{Comparison of concatenated schemes} 
 We compute the logical error rates and resource overheads of pieceably fault-tolerant CCZ gates on the 5-qubit code, 7-qubit code \cite{Yoder2016c}, $3\times 3$ Bacon-Shor code and $3\times 9$ Bacon-Shor code \cite{Yodera}, and compare them to a magic-state scheme on the 7-qubit code.
Fig.~\fig{logical_error} shows the obtained logical error rates for these cases using two different settings of physical error rate, $p_1=p_2=p_3=p$ and $10p_1=p_2=0.1p_3=p$.
Lower and upper bounds on pseudothresholds are the crossing points of ``break-even" line and the upper and lower bounds for logical error rates. For both settings of physical error rate, the $3\times 3$ Bacon-Shor code has lower logical error rate than the magic-state scheme below pseudothreshold. 
For the 7-qubit code, whether the pieceable scheme has a lower rate than the magic-state scheme depends on the physical error rate setting.
The 5-qubit code has a large logical error rate due to a large number of pieces in the round-robin construction.
Similarly, the $3\times 9$ Bacon-Shor code has a higher logical error rate than the $3\times 3$ Bacon-Shor code because the size of the logical code block is obviously much bigger. Moreover, the $3\times 9$ Bacon-Shor needs to implement verification for 9-qubit CAT states.
 
We now compare the resource overheads. Table~\tbl{resource} shows the resource overheads to implement a logical CCZ gate with these constructions.
We assume that the ancillas are not reusable. 
Due to a finite ancilla verification rejection rate, the effective resource count is slightly higher than the values in the table. 
However, the rejection rate of the verification is $\oo(p)$, and the effective resource count is obtained by multiplying $(1-n_{rej}p)^{-1}$ where $n_{rej}$ is the number of error locations that lead to rejection. 
Since we are interested in the region $p<10^{-4}$, and the largest module involving verification is the magic state preparation circuit, which has $n_{rej}\sim 100$, increase in the resource due to verification is within 1\%.
Thus, it is safe to ignore the effects of verification. 
Besides using more 3-qubit gates, pieceable constructions on the 7-qubit and $3\times 3$ Bacon-Shor code have smaller resource overheads compared to the magic-state scheme. 
In particular, they have a significant reduction in circuit volume.
Fig.~\fig{volume} shows circuit volumes for the pieceable 7-qubit code, $3\times 3$ Bacon-Shor code, and magic-state scheme. 
Transformation matrices $A$ (see Eq.~\eqref{eq:higher_concate}) for these codes are given in Appendix~\ref{app:volume_calc}.

Combining the results for the logical error rates and circuit volume, we conclude that the pieceable construction on the $3\times 3$ Bacon-Shor code beats the magic-state scheme on the 7-qubit code in both the criteria. The pieceable construction on the 7-qubit code also beats magic state injection in circuit volume, and in logical error rate when $p_1=p_2=p_3$. 

\begin{figure}[htbp] 
\begin{center}
\subfigure[]{
\includegraphics[scale=0.25]{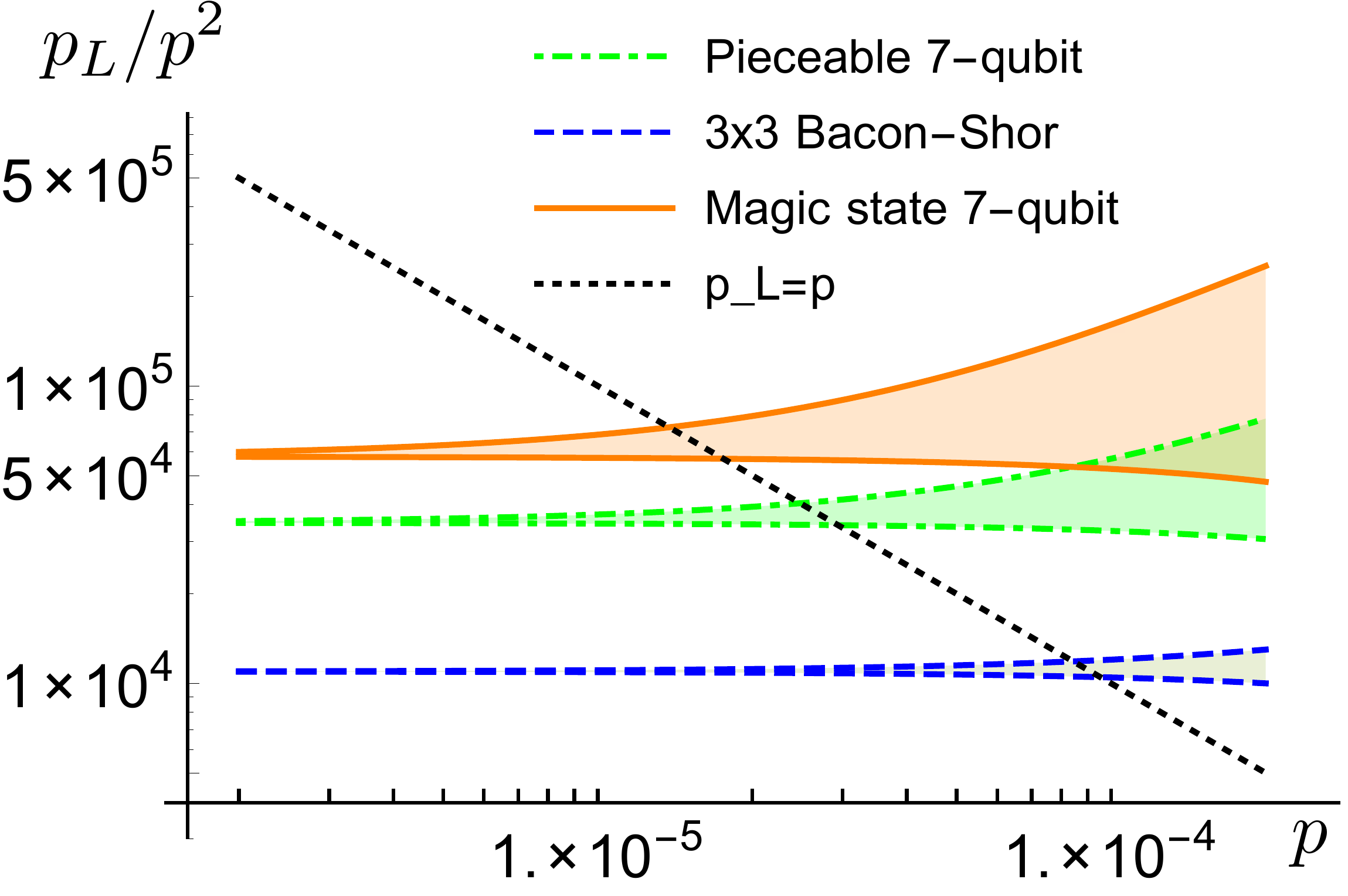}}
\subfigure[]{
\includegraphics[scale=0.25]{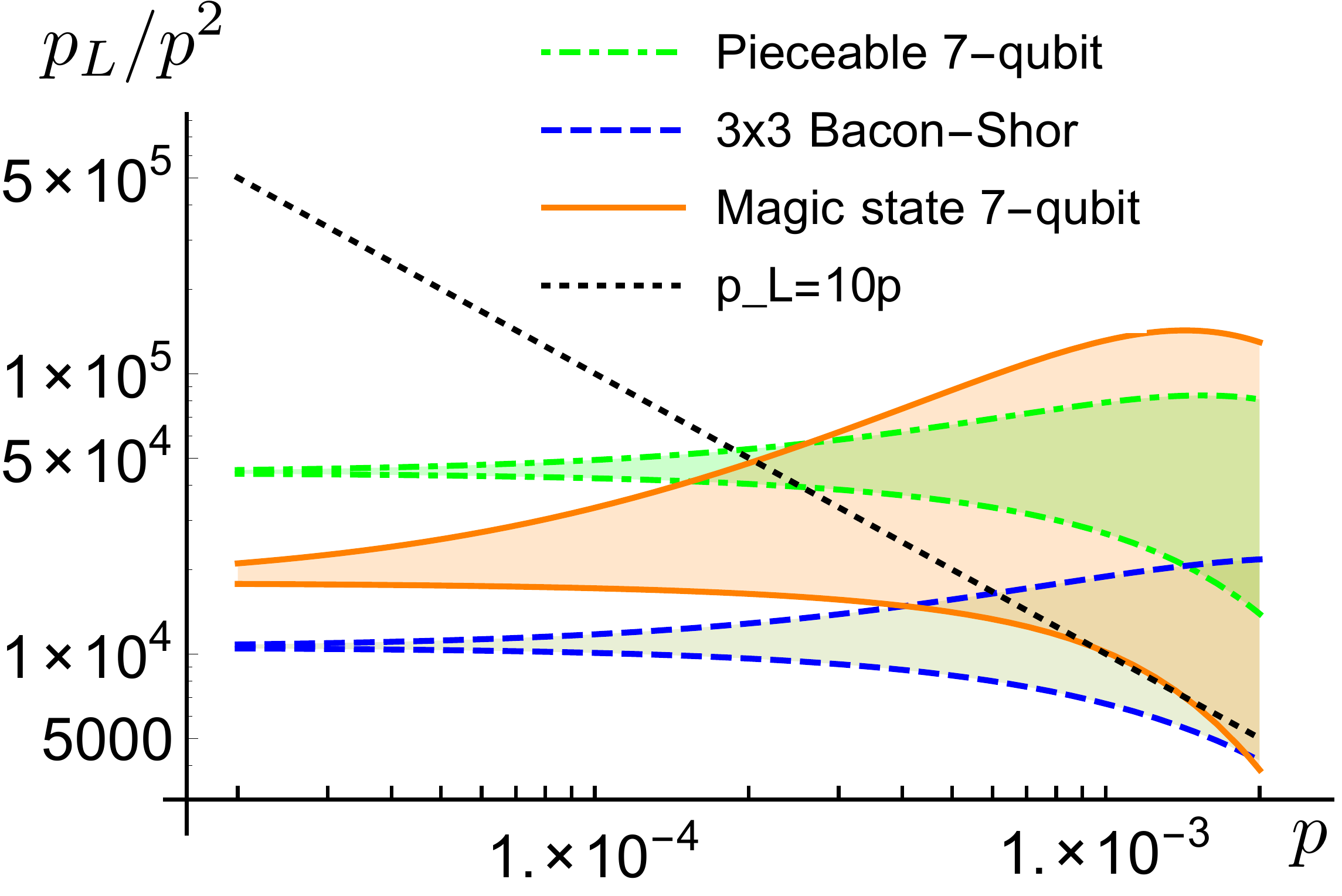}}
\subfigure[]{
\includegraphics[scale=0.25]{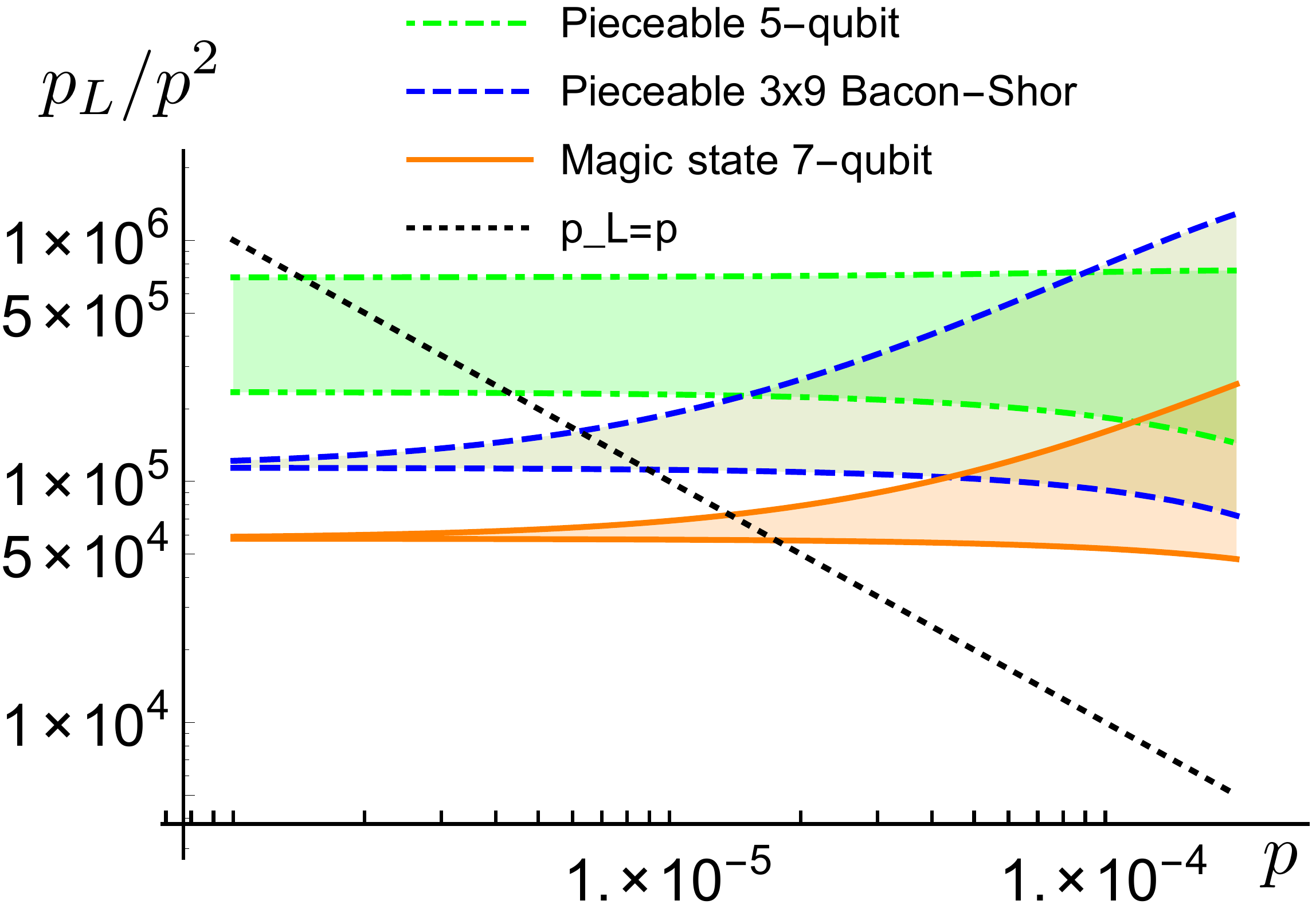}}
\subfigure[]{
\includegraphics[scale=0.25]{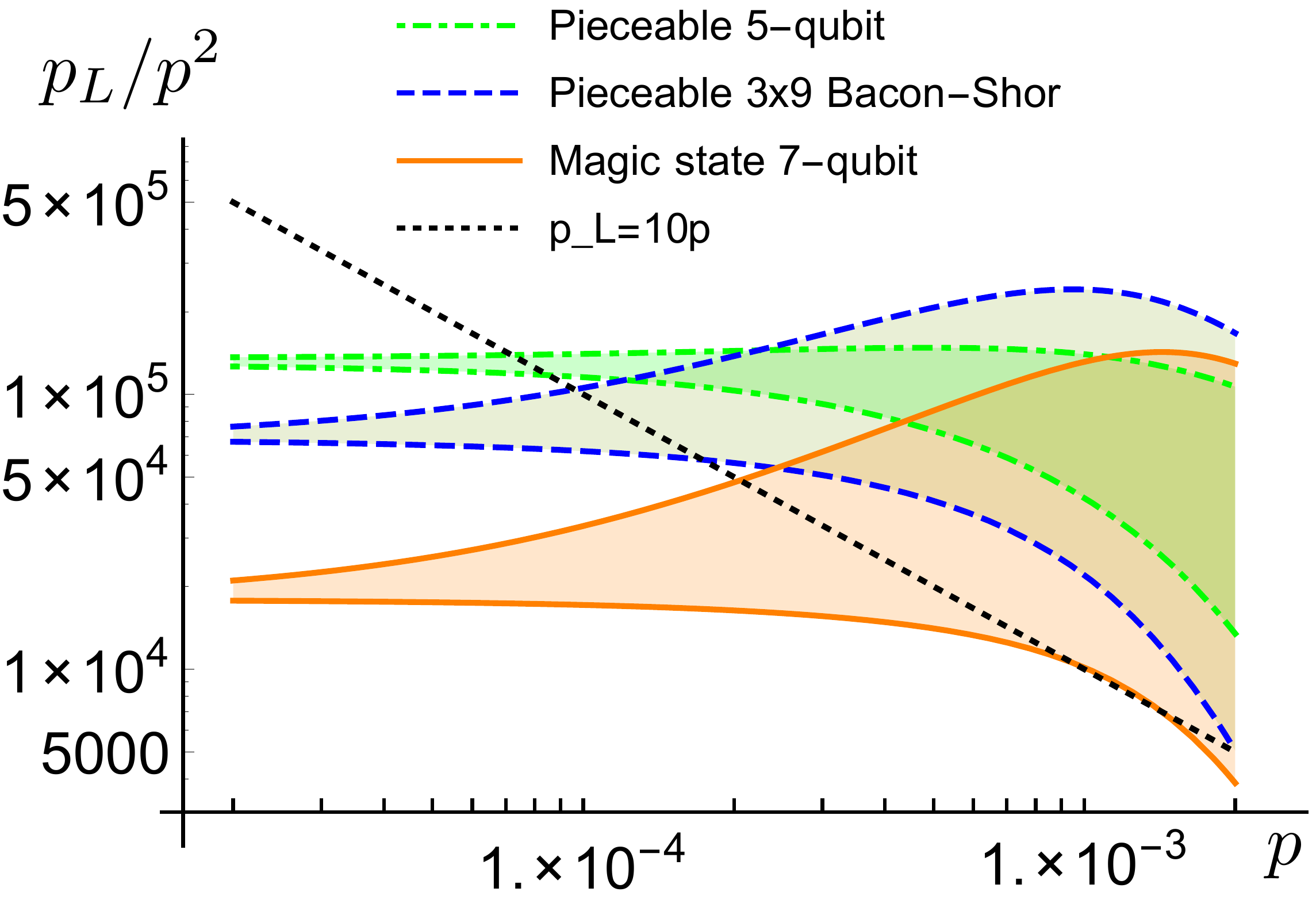}}
\caption{(Color online.) Logical error rates of 3-qubit gate on (a,b) pieceable 7-qubit code (green, dot-dashed), pieceable $3\times 3$ Bacon-Shor code (blue, dashed), (c,d) pieceable 5-qubit code (green, dot-dashed), pieceable $3\times 9$ Bacon-Shor code (blue, dashed), and magic state injection on 7-qubit code (orange, solid) with (a,c) $p_1=p_2=p_3=p$ and (b,d) $10p_1=p_2=0.1p_3=p$ where $p_i$ refers to physical error rate of $i$-qubit gate. Initialization of single-qubit states $\ket{0}$, $\ket{+}$ also fails with probability $p_1$. Dotted line is the ``break-even" line where logical error rate coincides with physical error rate. }
\label{fig:logical_error}
 \end{center}  
\end{figure}


 \begin{table}[htbp]
 \centerline{\begin{tabular}{|c|c|c|c|c|}
    \hline
    & Volume & Qubits & 2-qubit gates & 3-qubit gates\\
    \hline\hline
  Pieceable 5-qubit & 3841 & 364 & 445 & 46\\
  \hline
  Pieceable 7-qubit & 771 & 93 & 162 & 21\\
  \hline
  $3\times 3$ Bacon-Shor & 414 & 81 & 90 & 27\\
  \hline
   $3\times 9$ Bacon-Shor & 1350 & 252 & 306 & 27\\
  \hline
   Magic state & 1352 & 154 & 267 & 14\\
  \hline\hline
  $3\times 3$ BS/Magic & 0.31 & 0.59 & 0.34 & 1.9\\
  \hline
  \end{tabular}}
  \caption{Resource overheads to implement logical CCZ. Volume refers to the circuit volume, which counts all gates weighted by the number of qubits involved. Qubits are the number of physical qubits including data qubits and ancilla qubits where ancilla qubits are assumed to be not reusable. Numbers for the 5-qubit code include all the resources for the adaptive measurements.}
  \label{tbl:resource}
 \end{table}

 \section{Comparison to surface codes} 
 \text{Next,} we compare logical error rate and resource overheads to a local magic-state scheme on surface codes.
We find that the pieceable construction can have a significant advantage in circuit volume in a certain region in terms of physical error rate and target logical error rate.

\subsection{Logical error rates}
Surface codes are known to have high asymptotic threshold, which is 0.1\%-1\% depending on assumptions and error model \cite{Fowler2012a,Fowler2009,Wang2009b,Wang2011,Wootton2012}, and thus they have attracted attention as a candidate for a scalable quantum computer. 
However, having a high asymptotic threshold does not automatically imply that logical error rate is always low for reasonably sized codes.

Firstly, as can be seen in \cite{Fowler2012a}, in the low distance regime the pseudothreshold of the surface code is much smaller than the asymptotic threshold. 
Thus, if the physical error rate is lower than the asymptotic threshold but not below the relevant pseudothreshold, encoding at low distance does not help to reduce the error rate. 

Secondly, the logical error rate of a logical gate can be large even if the error rate for one surface code {\it cycle} is small, because a logical gate is made up of many cycles. Each cycle consists of measuring the complete error syndrome once via measurement qubits, one per stabilizer generator, as in \cite{Fowler2012a}.
Let $\bar{p}_{cycle}$ be the logical error rate for surface code per surface code cycle. 
Let $C_G$ be the number of surface code cycles it takes to implement a logical version of gate $G$.
Then, logical error rate of gate $G$ is $\bar{p}_G\approx C_G \bar{p}_{cycle}$.
Since $C_G \propto d$ and $\bar{p}_{cycle}\propto p^{(d+1)/2}$ where $d$ is the surface code distance, $\bar{p}_{cycle}$ dominates for large distance.
However, when $d$ is small, the contribution to $\bar p_G$ from $C_G$ is not negligible. 
In Appendix~\ref{app:surface_calc}, we find a specific form of $C_G$ for the logical Toffoli gate for two different implementations. 

Fig.~\fig{error_concate_surface} shows logical error rates of a 3-qubit gate on the surface code using a Toffoli state, and upper bounds of logical error rate of pieceable $3\times 3$ Bacon-Shor code and pieceable 7-qubit code in terms of code distance with three different physical error rates.
Upper bounds are obtained by concatenating the function upper bounding the actual rate in Eq.~\eq{bound}. 
Since the 3-qubit gate is the largest component among the components that appear in the logical construction of 3-qubit gate, concatenating the upper bounding error function for the 3-qubit gate upper bounds its error rates at higher concatenation level.
However, because logical 3-qubit gates have an order of magnitude higher error rate than 2-qubit gates and the logical constructions of 3-qubit gates mostly consist of single gates and 2-qubit gates, this upper bound is highly pessimistic. 
A careful analysis taking into account error functions for other types of components and possibly even using better decoding algorithm \cite{Poulin2006,Fern2008a} at a higher levels may greatly reduce estimates of logical error rates.

Nevertheless, in Fig.~\fig{error_concate_surface}, we can see that surface codes have better scaling with distance than pieceable concatenated codes, which should be attributed to the high threshold. However, for small $d$, $C_{\mbox{Toffoli}}$ has a significant contribution, and when $d=3$ the logical error rate of the pieceable constructions is two orders of magnitude lower than that of the surface code.

\begin{figure}[htbp]
\begin{center}
\subfigure[]{
\includegraphics[scale=0.3]{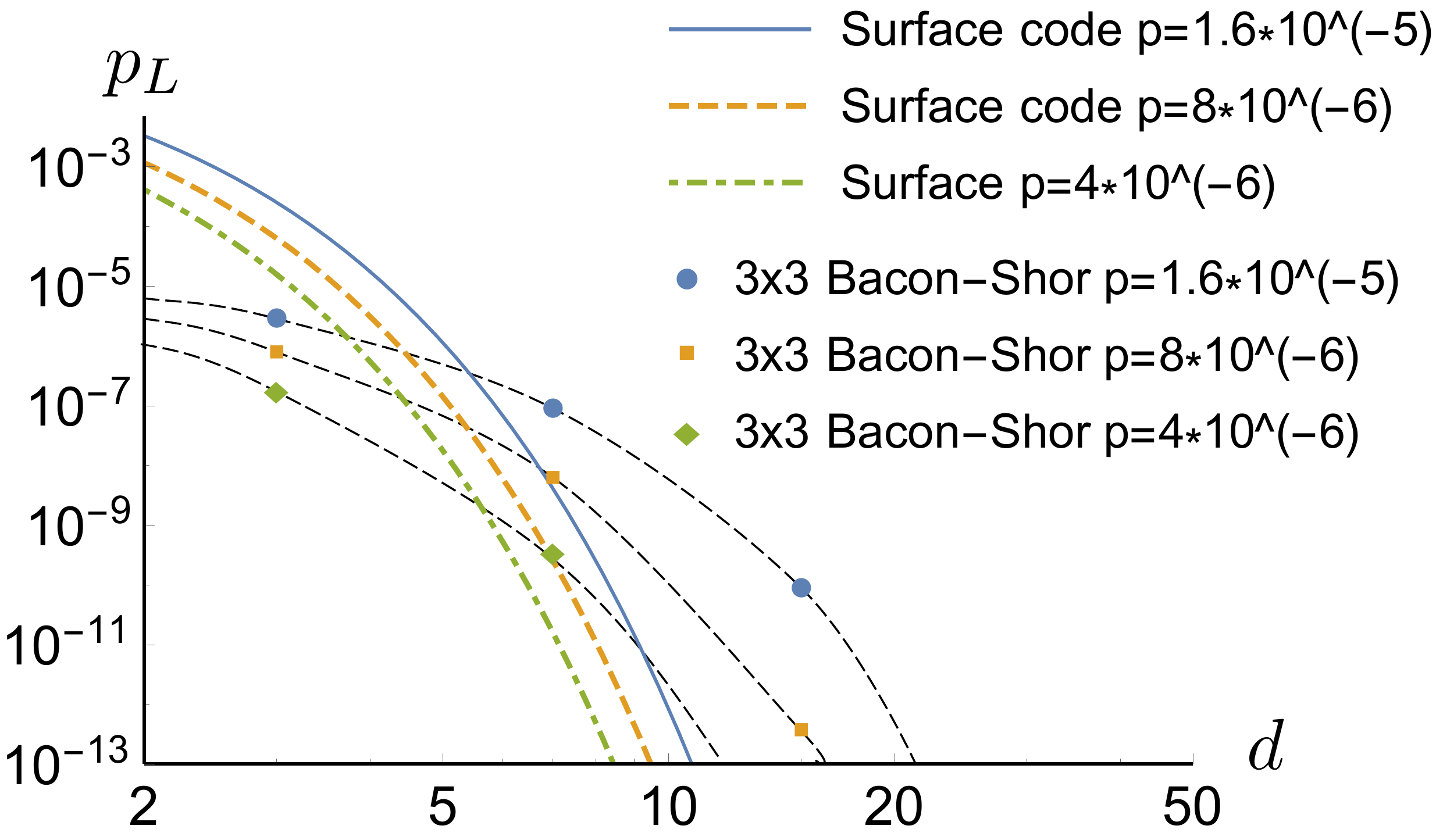}}
\subfigure[]{
\includegraphics[scale=0.3]{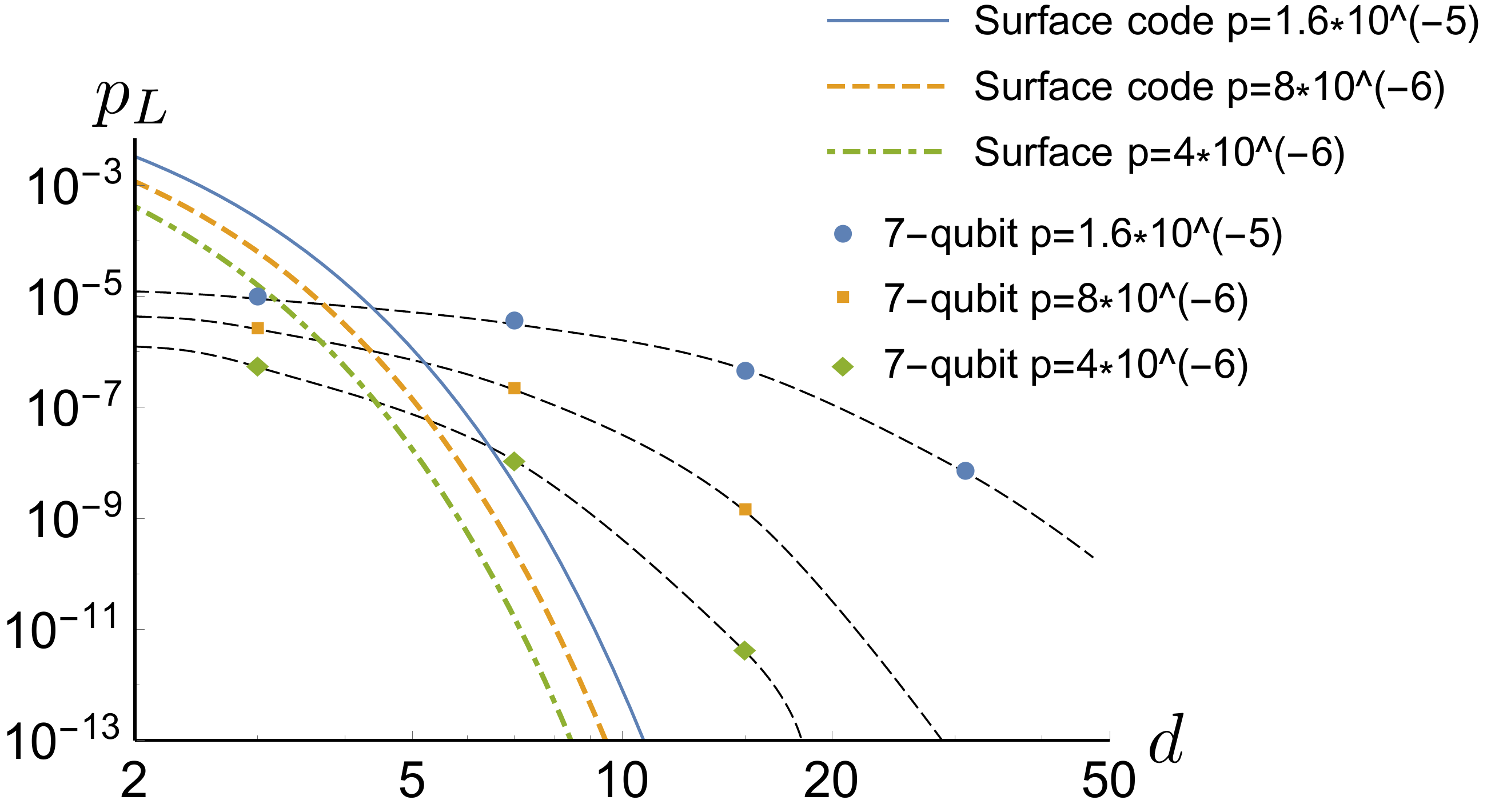}}
\caption{Logical error rates for 3-qubit gate on the surface codes and (a) pieceable $3\times 3$ Bacon-Shor code (b) pieceable 7-qubit code in terms of code distance. Shown rates for pieceable codes are upper bounds obtained by concatenating the upper bounding function from Eq.~\eq{bound}. Black dashed curves are only a guide to the eye.}
\label{fig:error_concate_surface}
\end{center}
\end{figure}

\subsection{Resource overheads}
We also count the circuit volume for implementing logical Toffoli on surface codes.
This allows us to compare the circuit volume between pieceable codes and the surface codes, shown in Fig.~\fig{volume}.
Although surface codes have better scaling with distance, pieceable constructions have a significant advantage until three concatenations. 
This is especially true at distance three, where the difference is three orders of magnitude.

 \begin{figure}[htbp] 
 \begin{center}
\includegraphics[scale=0.3]{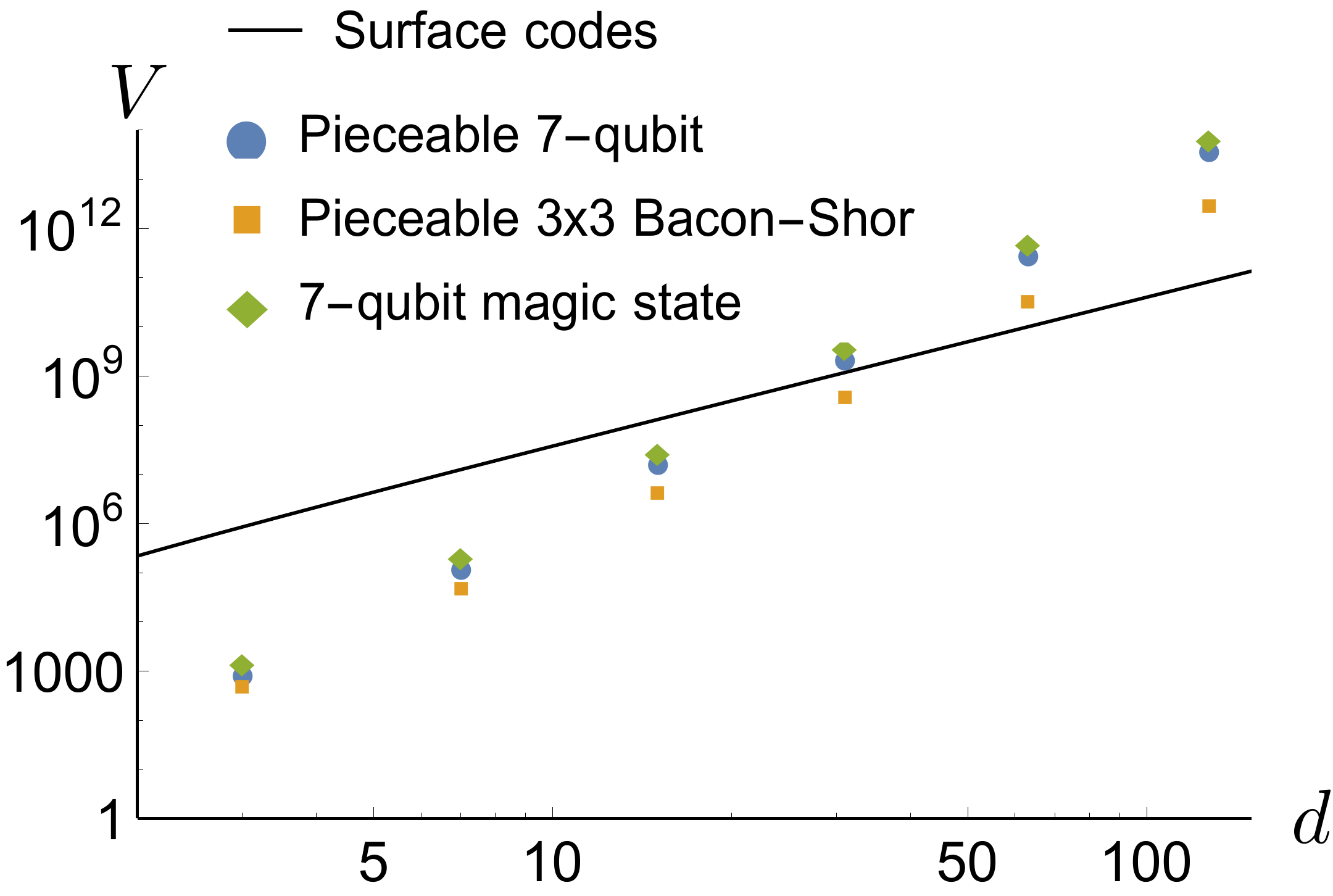}
\caption{Circuit volume for logical 3-qubit gate on pieceable 7-qubit code(circles), pieceable $3\times 3$ Bacon-Shor code(squares), and magic-state scheme on 7-qubit code(diamonds) in terms of code distance. The dots correspond to every concatenation level in the range. Although it may be hard to see the data for pieceable 7-qubit code because they are close to the data for the magic-state scheme, the pieceable 7-qubit has slightly lower volume than the magic-state scheme for every distance shown. 
}

\label{fig:volume}
 \end{center}  
\end{figure}

Consider now the space consisting of pairs (physical error rate, target logical error rate)$\equiv (p,p_T)$. Combining volume and error rate estimates, the region of this space where concatenated pieceable constructions require less circuit volume for implementing Toffoli than the surface code can be obtained. Fig.~\fig{error_volume} shows this region for the pieceable $3\times 3$ Bacon-Shor code and the 7-qubit code.
 It shows that in large range, pieceable $3\times 3$ Bacon-Shor code has advantage in circuit volume over surface code, and the difference can be significant as can be seen in Fig.~\fig{volume}.
This region is actually determined by the upper bound of error rates at the third level concatenation. 
 It is because surface code with distance five has already larger volume than $3\times 3$ Bacon-Shor code with three concatenations as can be seen in Fig.~\fig{volume}.
For the 7-qubit code, Fig.~\fig{volume} shows that a 3-qubit logical gate at two concatenations of the 7-qubit code has less circuit volume than the surface code of any size. Thus, whenever two concatenations are sufficient to achieve the target logical error rate, the 7-qubit code will be advantaged, as is represented by the region in Fig.~\ref{fig:error_volume}. 
Fig.~\fig{volume} also shows that the volume for the 7-qubit code with three concatenations is slightly larger than that for the surface code with distance seven. 
Thus, the surface code is advantaged for the region where distance seven is enough for the surface code but three concatenations are needed for the 7-qubit code, which corresponds to the region between the upper purple region and the lower purple region in Fig.~\ref{fig:error_volume}. 
The 7-qubit code again starts to have advantage over surface code for the region where the surface code needs distance nine whereas the 7-qubit code only needs to be concatenated three times, which corresponds to the lower purple region in Fig.~\ref{fig:error_volume}.


 \begin{figure}[htbp] 
\begin{center}
\includegraphics[scale=0.35]{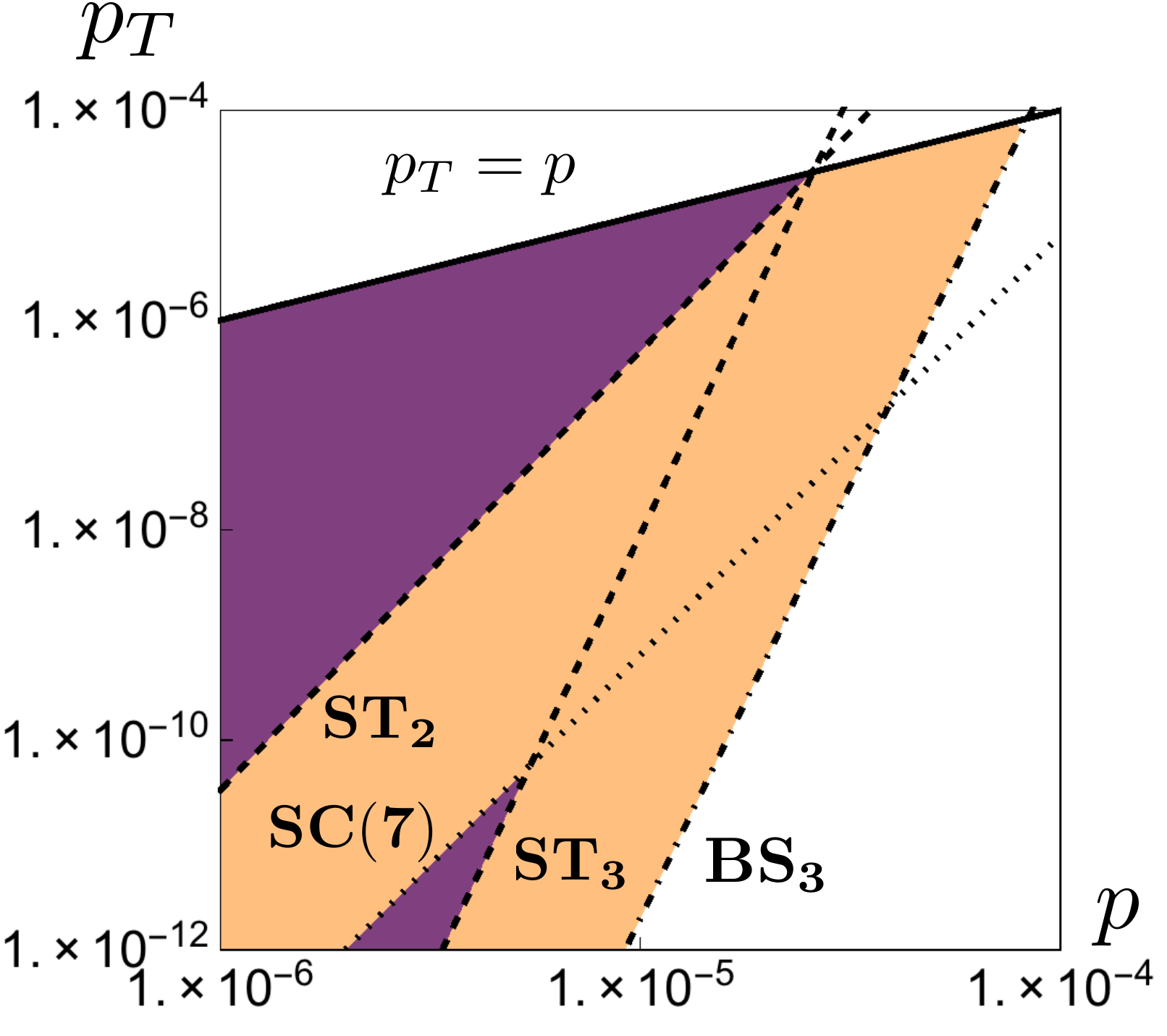}
\caption{(color online.) The region where pieceable $3\times 3$ Bacon-Shor code (orange and purple) and pieceable 7-qubit code (purple) use less volume than surface codes to implement 3-qubit gate to achieve fixed target logical error rate, $p_T$, with fixed physical error rate, $p$. Dashed lines labeled by ${\bf ST_j}$ are upper bounds on logical error rate at the $j^{\text{th}}$-concatenation level of the pieceable 7-qubit code (Steane code). The dotted line labeled by ${\bf SC(7)}$ is the logical error rate for the surface code with distance seven. The dot-dashed line labeled by ${\bf BS_3}$ is an upper bound of logical error rate at the third-concatenation level of the $3\times 3$ Bacon-Shor code. The solid line on the upper boundary is the $p_T=p$ line. 
}
\label{fig:error_volume}
\end{center}  
\end{figure}

 \section{Conclusions} 
 In this paper, we calculated logical error rates and resource overheads of 3-qubit gates using pieceable fault-tolerant constructions, a non-local magic-state scheme (on the 7-qubit code), and a local magic-state scheme (on the surface code).
In comparison with the non-local magic-state scheme, we found that while pieceable constructions have comparable, or even lower logical error rate to the magic-state scheme, the required circuit volume can be as little as 30\%. This suggests that the pieceable construction is a promising complement to schemes relying on magic states.

We also compared the pieceable construction to the surface codes and found that in quite a large region in terms of physical error rates and target logical error rates, pieceable constructions can have significantly lower circuit volume than surface codes.

Although realizing physical components with a small physical error rate such that pieceable constructions have a great advantage is challenging, one should notice that surface codes also have as hard a challenge as this in terms of resource overheads. 
Just as surface codes are good candidates given access to large overheads, the pieceable construction appears to be a good candidate given access to small physical error rates.

Another difference between pieceable constructions and the surface code is locality, i.e.~the constraint that physical gates involved act only between qubits that are neighboring in some chosen low-dimensional layout. 
Although the locality property is desirable in many experimental setups, some systems allow non-local interaction too \cite{Monroe2014}. 
Our result indicates that such a non-local techniques can lead to significant reduction of resource use for quantum error correcting codes.

 \section*{Acknowledgement} 
 R.T. gratefully acknowledges the support of the Takenaka scholarship. T.Y. appreciates the support of the Department of Defense (DoD) through the National Defense Science and Engineering Graduate (NDSEG) Fellowship program.

\appendix

\section{Steane's error correction for arbitrary stabilizer codes}\label{app:generalized_steane}
Here, we describe the error correction used for the leading error correction and trailing error correction of the 5-qubit code. Since the 5-qubit code is not CSS, one might think Steane's error-correction is inappropriate. However, in \cite{Steane1997}, Steane proposes a circuit to do just that for the 5-qubit code. Unfortunately, Steane's construction as written is not quite correct. We present the correct method that works for any stabilizer code. We will also see that this method gives a conceptually simple way to prepare the necessary ancilla state in line with Steane's original proposal \cite{Steane1997}.

Consider a $\llbracket n,k\rrbracket$ stabilizer code $\mathcal{C}$ with stabilizer
\begin{equation}
S=\left(\begin{array}{c|c}S_x&S_z\end{array}\right),
\end{equation}
and logical operators
\begin{equation}
N=\left(\begin{array}{c|c}N_x&N_z\end{array}\right),
\end{equation}
written in symplectic matrix form. That is, ${S_x,S_z\in\mathbb{F}_2^{n-k}\times\mathbb{F}_2^n}$ and ${N_x,N_z\in\mathbb{F}_2^{2k}\times\mathbb{F}_2^{n}}$. Also, if we define ${\Lambda=\left(\begin{smallmatrix}0&I\\I&0\end{smallmatrix}\right)\in\mathbb{F}_2^{2n}\times\mathbb{F}_2^{2n}}$ using $k\times k$ block notation, then the canonical commutation relations are expressed by ${S\Lambda S^T=S\Lambda N^T}$ and ${N\Lambda N^T=A}$ for the ${2k\times 2k}$ matrix $A$ with $1$s on only the antidiagonal.

Following Steane, we propose the circuit in Fig.~\ref{fig:Steane_EC} to extract the syndrome of $\mathcal{C}$. The ancilla state used is twice the size of the code $\mathcal{C}$. The stabilizer of the ancilla state $\ket{\overline{a}}$ can be written
\begin{equation}\label{eq:anc_stabilizer}
S_a=\left(\begin{array}{cc|cc}
S_z&S_x&0&S_z\\
0&0&S_x&S_z\\
0&0&N_x&N_z
\end{array}\right).
\end{equation}
We show that this ancilla state and the circuit in Fig.~\ref{fig:Steane_EC} successfully extract the syndrome without giving information about the logical operators by propagating the observables of the code $\mathcal{C}$ and the stabilizer $S_a$ through the circuit. Begin with,
\begin{align}
\left(\begin{array}{ccc|ccc}aS_x&0&0&S_z&0&0\\bN_x&0&0&N_z&0&0\\0&S_z&S_x&0&0&S_z\\0&0&0&0&S_x&S_z\\0&0&0&0&N_x&N_z\end{array}\right),
\end{align}
where the syndrome is $a\in\mathbb{F}_2^{n-k}$ and logical operator values are $b=\mathbb{F}_2^{2k}$. After the controlled-$Z$ gates,
\begin{equation}
\left(\begin{array}{ccc|ccc}aS_x&0&0&S_z&S_x&0\\bN_x&0&0&N_z&N_x&0\\0&S_z&S_x&S_z&0&S_z\\0&0&0&0&S_x&S_z\\0&0&0&0&N_x&N_z\end{array}\right).
\end{equation}
After the controlled-$X$ gates,
\begin{equation}
\left(\begin{array}{ccc|ccc}aS_x&0&0&S_z&S_x&S_z\\bN_x&0&0&N_z&N_x&N_z\\S_x&S_z&S_x&S_z&0&0\\0&0&0&0&S_x&S_z\\0&0&0&0&N_x&N_z\end{array}\right).
\end{equation}
This is equivalent to the stabilizer,
\begin{equation}
\left(\begin{array}{ccc|ccc}aS_x&0&0&S_z&0&0\\bN_x&0&0&N_z&0&0\\a0&S_z&S_x&0&0&0\\0&0&0&0&S_x&S_z\\0&0&0&0&N_x&N_z\end{array}\right),
\end{equation}
and so we see that measuring all ancilla qubits in the $X$-basis results in a bitstring $m\in\mathbb{F}_2^{2n}$ such that $S\Lambda m=a$.

We note that $\ket{\overline{a}}$ is simply related to a Bell pair $\ket{\Phi}=(\ket{00}+\ket{11})/\sqrt{2}$ encoded in $\mathcal{C}$. If $\text{CX}_{tb}$ denotes $n$ CX gates transversally acting from the top $n$ qubits of the ancilla to the bottom $n$ and $H_t$ denotes $n$ $H$ gates applied to the top $n$ qubits, then $\ket{\overline{a}}=H_t\text{CX}_{tb}\ket{\overline{\Phi}}$. Thus, we can think of $\ket{\overline{a}}$ as an encoded Bell pair that has been ``transversally disentangled". Circuit identities can be used to rearrange Fig.~\ref{fig:Steane_EC} to Knill's error-correction \cite{Knill2005a}. Also, if $\mathcal{C}$ is CSS, Fig.~\ref{fig:Steane_EC} reduces to Steane's error-correction for CSS codes \cite{Steane1998a}.

Steane's original proposal for non-CSS error-correction \cite{Steane1997} omitted the $S_z$ on the right side of the first row of Eq.~\eqref{eq:anc_stabilizer}. Doing the same calculation as above shows that this will not succeed in measuring the syndrome. Steane's proposal suggested that the ancilla state would always be CSS for any code. This, unfortunately, is not true. Indeed, the 7-qubit code from \cite{Yoder2017} has an ancilla that is not even local-Clifford (LC) equivalent to a CSS state.

However, there are non-CSS codes for which $S_a$ is LC equivalent to a CSS code. The 5-qubit code with stabilizer
\begin{equation}
S_5=\left(\begin{array}{ccccc|ccccc}
1&0&1&0&0&0&0&0&1&1\\
0&1&0&0&1&0&0&1&1&0\\
1&0&0&1&0&0&1&1&0&0\\
0&0&1&0&1&1&1&0&0&0
\end{array}\right)
\end{equation}
is one of these. Indeed, $S_a$ can be written using only $Y$-type and $Z$-type generators. This allows us to prepare the ancilla using Fig.~\ref{fig:Steane_prep}, and verify the ancilla against single circuit faults using Fig.~\ref{fig:Steane_verification}, which are both standard constructions for CSS states \cite{Steane1998a,Cross2009}.

\begin{figure}[htbp] 
\begin{center}
\includegraphics[scale=0.3]{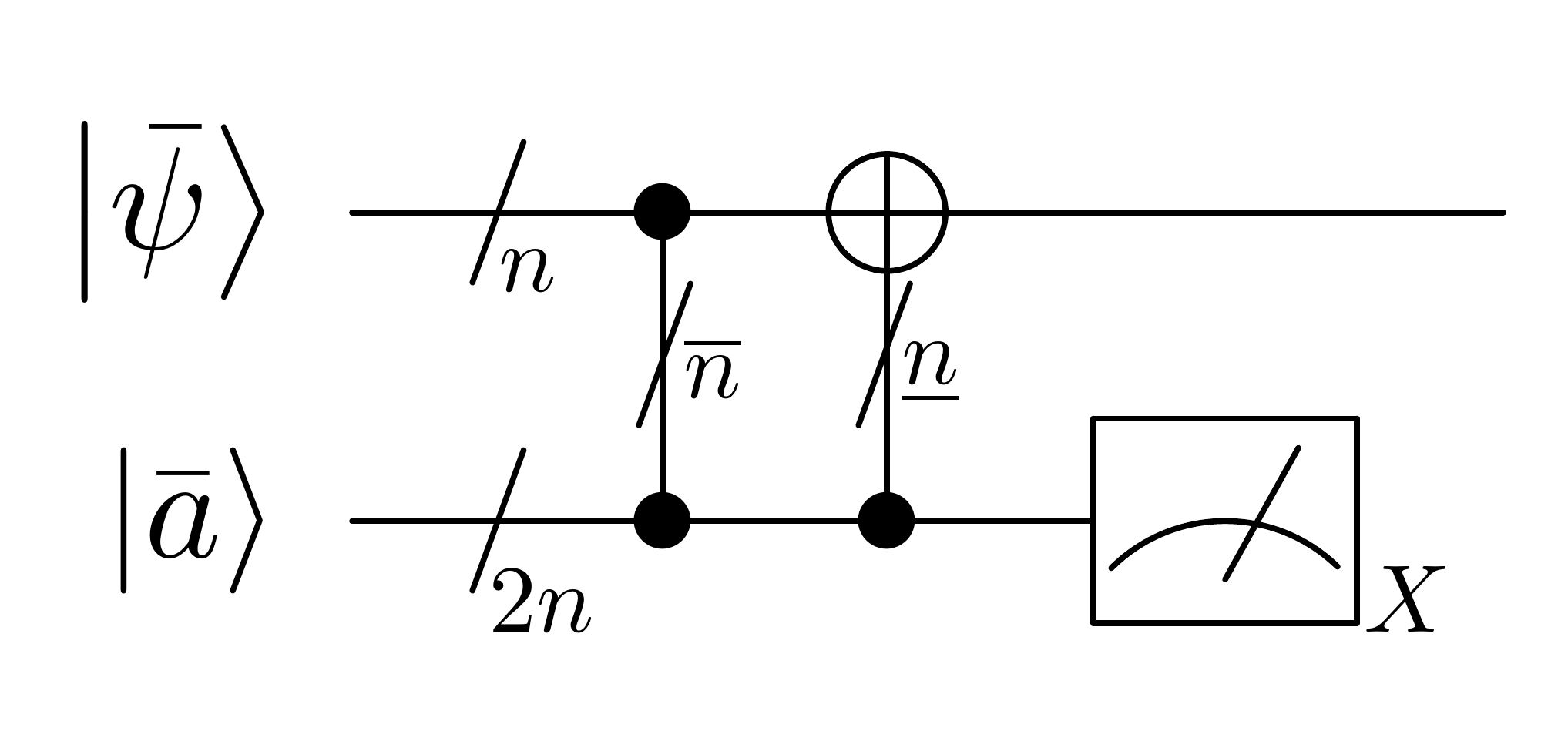}
\caption{Circuit for Steane's error correction on a non-CSS code. $\ket{\bar{\psi}}$ is an arbitrary encoded state, and $\ket{\bar{a}}$ is the $2n$-qubit ancilla state from Eq.~\eqref{eq:anc_stabilizer}. The notation $\overline{n}$ and $\underline{n}$ means that the $n$ CZ gates are transversally coupled to the top $n$ qubits in the ancilla, and that the $n$ CNOT gates are transversally coupled to the bottom five qubits in the ancilla. Measurement is done transversally, from which the syndrome can be classically computed.}
\label{fig:Steane_EC}
\end{center}  
\end{figure}

 \begin{figure}[htbp] 
\begin{center}
\includegraphics[scale=0.4]{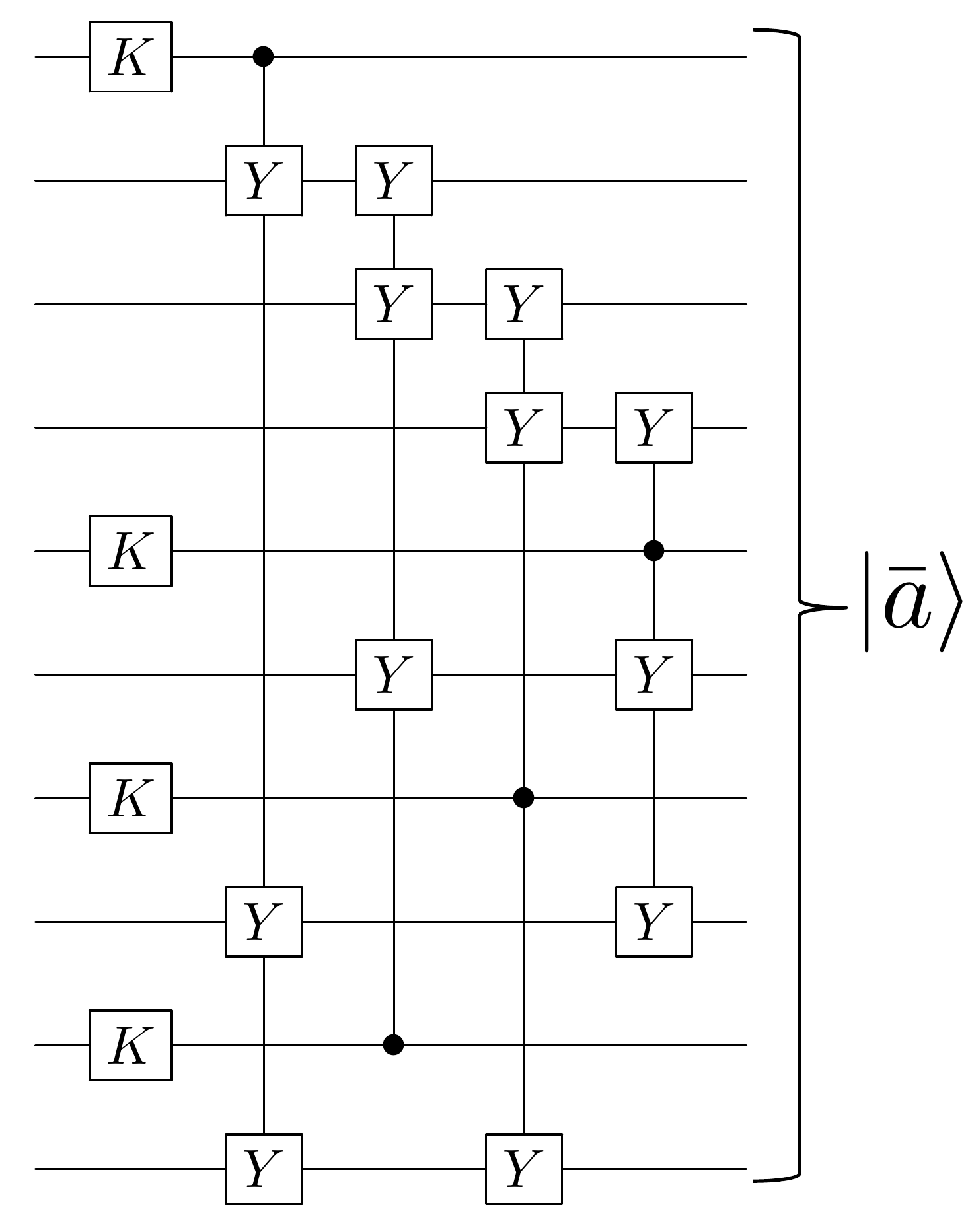}
\caption{Preparing the error-correction ancilla state for the 5-qubit code for use in Fig.\fig{Steane_EC}. Input states are all prepared in $\ket{0}$.}
\label{fig:Steane_prep}
\end{center}  
\end{figure}

 \begin{figure}[htbp] 
\begin{center}
\includegraphics[scale=0.3]{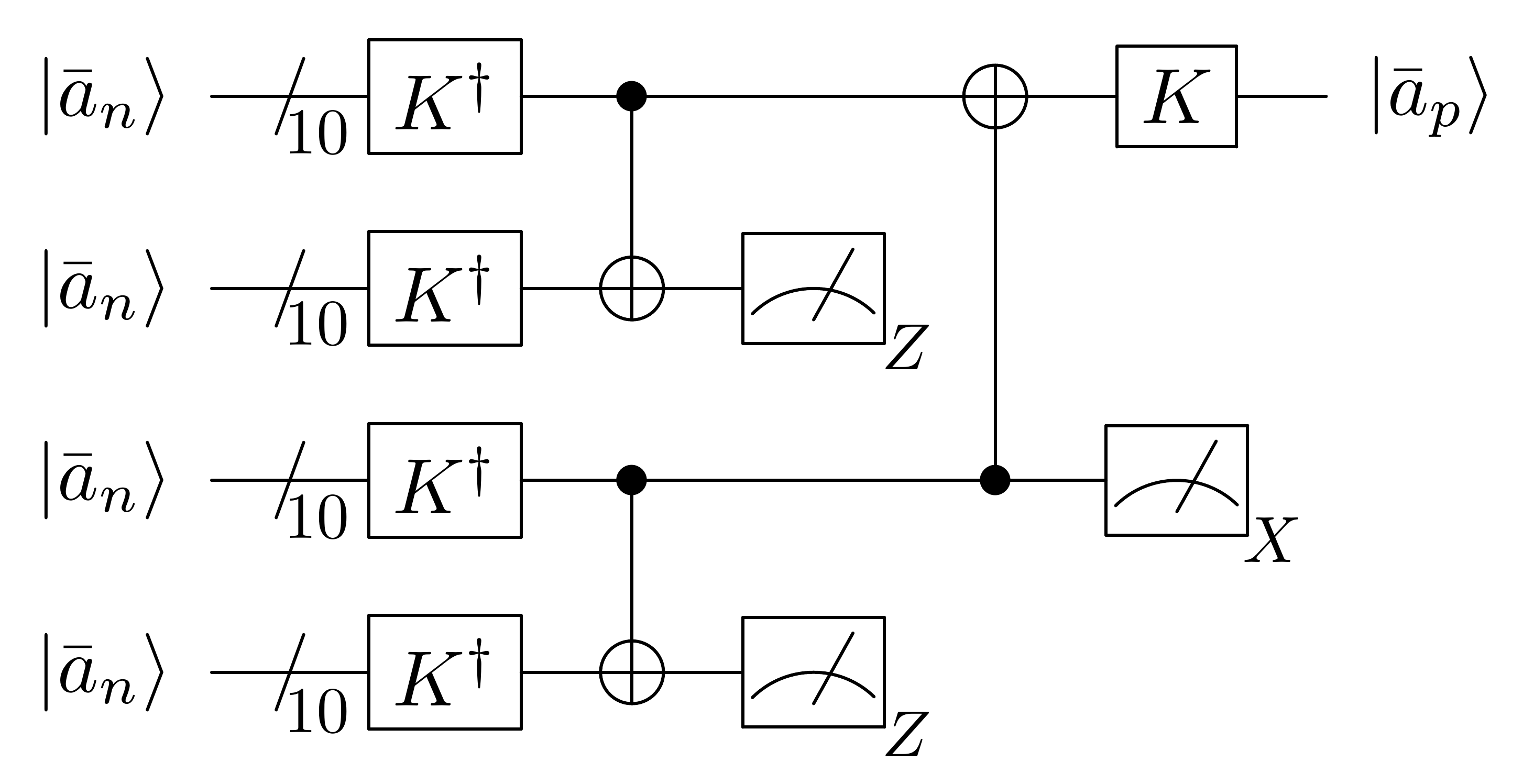}
\caption{Verification circuit for the ancilla state prepared by the circuit in Fig.\fig{Steane_prep}. $\ket{\bar{a}_n}$ is a noisy ancilla state that needs to be verified, and $\ket{\bar{a}_p}$ is a purified one.}
\label{fig:Steane_verification}
\end{center}  
\end{figure}

\section{Details of the simulation for logical error estimation}\label{app:sim_details}
Here, we describe some techniques used in the estimation of logical error rates. 

For reasons of simulation efficiency, only errors originating from at most two faults are considered, but all such errors are counted. For Clifford circuits, propagating the Pauli errors resulting from circuit depolarizing noise can be done simply using the Gottesman-Knill theorem \cite{Gottesman1998a}. However, some of our circuits are built from non-Clifford CCZ gates. In this case, a tracked error is modified to include controlled-$Z$ (CZ) terms. A Pauli error that propagates through $m$ CCZ gates picks up at most $m$ CZ terms (some may cancel). Upon measurement (e.g.~in the error-correction circuits), the CZ terms must be broken down into a sum of Paulis, only some of which flip measurement bits to cause a signal. 
We treat each term as different error element with the probability equal to the square of the amplitude of the term. 

There is a subtlety in breaking down CZ errors. 
As a sum of Pauli terms, a CZ error is written $(II+ZI+IZ-ZZ)/2$.
If there are multiple CZ errors, this Pauli sum has every possible combination of $I$ and $Z$ on the qubits on which CZ errors are applied, each with a plus or minus sign.
Thus, $m$ CZ errors applied on different qubit pairs are decomposed into a Pauli sum with $4^m$ terms. 
If we treated each term as different error element at this point, each term would be assigned the probability square of the amplitude. 
However, some terms may be equivalent to other terms up to stabilizers.
Such terms should interfere coherently. 
In the simulation of pieceable CCZ on $3\times 3$ Bacon-Shor code, all the terms in the Pauli sum are rewritten in an unambiguous way up to stabilizers, and terms interfere before assigning them a probability.

Now that we recognize the subtle issue as the coherent addition of the Pauli terms, we argue that it does not affect the logical error rate except of the $3\times 3$ Bacon-Shor code case.
Firstly, note that the coherent addition can only happen when the number of qubits in one block on which CZ errors are applied is more than or equal to the weight of stabilizers.
This is because if stabilizers have higher weight, multiplying a stabilizer necessarily gives extra Paulis on the qubits that are not affected by CZ errors.
It prevents the term multiplied by a stabilizer being the same as another term in the Pauli sum.
In pieceable CCZ circuit on the 5-qubit code, CZ errors only occur on the three qubits, the support of logical $Z$. Since stabilizers are weight four, the coherent addition will not happen for the above reason.
For the pieceable CCZ circuit on the 7-qubit code, we argue that although the coherent addition may happen, it will not affect logical error rate. 
Since stabilizers for the 7-qubit code are weight four, the coherent addition could happen only when two $X$ errors go through in the same block in the first piece. 
However, these $X$ errors cannot be corrected because the 7-qubit code is a perfect CSS code.
Thus, all the error elements where the coherent addition could happen end up with logical errors regardless, and it does not matter whether we accurately interfere the terms. 
For the pieceable CCZ circuit on the $3\times 9$ Bacon-Shor code, the situation is similar to the 7-qubit code case; the coherent addition could happen, but will not affect the logical error rate. 
Since CCZs on the $3\times 9$ Bacon-Shor code are transversal, the number of qubits in the same code block on which CZ errors are applied is at most two. 
Since weight-two $Z$-gauge operators are aligned along a row, the coherent addition could only happen when two CZ errors are applied on the two qubits in the same row in some code block. 
However, all the terms in the Pauli sum for the CZ errors in that block are $Z$-type errors whose weight is less than or equal to two, and whose support is in the same row.
Since weight-one errors can be corrected by the standard error correction, and weight-two errors in the same row are equivalent to the identities up to stabilizers for the $Z$-gauge Bacon-Shor code, the terms in the Pauli sum are all correctable when the concerned coherent addition could happen.  
Thus, it will not affect the logical error rate. 
 
Considering CZ errors as a Pauli sum is inefficient -- $m$ CZ terms lead to $4^m$ Pauli addends.
However, in the simulation, we do not actually break down all the CZ errors. 
Under certain cases, we definitely know that the final error correction will succeed to correct the CZ error.
One of such cases is that the CZ error is applied over different code blocks and those code blocks do not have any $Z$ errors.
The other case is that the CZ error is applied in one code block, there are no $Z$ errors in the block, and an intermediate error correction notifies the correct locations that the CZ error is applied over.

Also, we can reduce the number of CZ errors by removing harmless CZ errors before the measurements in the final error correction take place. 
A harmless error is one that does not affect encoded states.
When errors are only Paulis, like in the circuits that only consist of Clifford gates, such errors are just stabilizers.
The following theorem generalizes the condition for the harmless errors to non-Pauli case.
\begin{thm}\label{thm:harmless}
 Let $E$ be an error operator, $S=\left<g_1,\dots ,g_{n-k}\right>$ be the stabilizers, $\left<g_{n-k+1},\dots,g_{n+k}\right>$ be the logical operators of the code, and $\ket{\bar{\psi}}$ be an encoded state. If $g_i^{\dagger}E^{\dagger}g_iE\in S$ for all $i=1,\dots,n+k$, then $E\ket{\bar{\psi}}=\ket{\bar{\psi}}$ up to global phase. 
\end{thm}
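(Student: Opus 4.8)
The plan is to prove something slightly stronger than the statement: that $E$, restricted to the code space $\mathcal{C}$ (the simultaneous $+1$-eigenspace of $g_1,\dots,g_{n-k}$), acts as a single global scalar, and then read off the claimed property for the particular state $\ket{\bar{\psi}}$. Throughout I would treat $E$ as unitary, which is the relevant case here since Paulis, CZ's, and their products are all unitary, so that $E^\dagger=E^{-1}$; the hypothesis is then that the group commutator $g_i^\dagger E^\dagger g_i E$ lies in $S$ for every generator $g_i$ of the normalizer $N(S)$, i.e.\ for the $n-k$ stabilizer generators and the $2k$ logical operators.

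First I would handle the stabilizer generators, $i=1,\dots,n-k$. From $g_i^\dagger E^\dagger g_i E = s_i$ with $s_i\in S$ we get $E^\dagger g_i E = g_i s_i$, hence $g_i E = E g_i s_i$. Applying this to $\ket{\bar{\psi}}$ and using $s_i\ket{\bar{\psi}}=g_i\ket{\bar{\psi}}=\ket{\bar{\psi}}$ gives $g_i\big(E\ket{\bar{\psi}}\big) = E g_i s_i\ket{\bar{\psi}} = E\ket{\bar{\psi}}$. Since this holds for every stabilizer generator, $E\ket{\bar{\psi}}\in\mathcal{C}$; the identical computation with $\ket{\bar{\psi}}$ replaced by any code state shows $E$ maps $\mathcal{C}$ into itself, and unitarity then makes $E|_{\mathcal{C}}$ a unitary on $\mathcal{C}$.

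Next I would use the logical generators, $i=n-k+1,\dots,n+k$, writing $\bar{P}_i=g_i$. As before, $E^\dagger \bar{P}_i E = \bar{P}_i s_i$ with $s_i\in S$, so for any $\ket{\bar{\phi}}\in\mathcal{C}$ we have $E^\dagger \bar{P}_i E\ket{\bar{\phi}}=\bar{P}_i s_i\ket{\bar{\phi}}=\bar{P}_i\ket{\bar{\phi}}$, whence $\bar{P}_i E\ket{\bar{\phi}}=E\bar{P}_i\ket{\bar{\phi}}$. Thus the unitary $E|_{\mathcal{C}}$ commutes with all of $\bar{X}_1,\dots,\bar{X}_k,\bar{Z}_1,\dots,\bar{Z}_k$ acting on $\mathcal{C}$. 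These logical operators satisfy the $k$-qubit Pauli relations and act on the $2^k$-dimensional space $\mathcal{C}$ as a copy of the defining representation of the $k$-qubit Pauli group, which is irreducible — equivalently, products of the $\bar{P}_i$ span the full operator algebra on $\mathcal{C}$. By Schur's lemma, $E|_{\mathcal{C}}=\lambda\,\mathrm{id}_{\mathcal{C}}$ for some scalar $\lambda$, and $|\lambda|=1$ by unitarity. In particular $E\ket{\bar{\psi}}=\lambda\ket{\bar{\psi}}$, which is exactly the assertion.

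The two commutator manipulations are routine; the one real point is the reduction to Schur's lemma — recognizing that ``commutes with $E$ modulo $S$'' collapses, on the code space, to genuine commutation because $S$ acts trivially there, and that the logical operators then act irreducibly on $\mathcal{C}$. The thing to be careful about is that $E$ is \emph{not} a Pauli operator here (it carries CZ terms), so one cannot argue purely with symplectic/$\mathbb{F}_2$ commutation relations as in the Clifford case; the argument must be carried out at the level of operators on $\mathcal{C}$, and the hypothesis $g_i^\dagger E^\dagger g_i E\in S$ is precisely the input that survives that passage.
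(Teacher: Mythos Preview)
Your proof is correct and follows essentially the same route as the paper: first use the stabilizer generators to show $E$ preserves the code space, then use the logical generators to show $E|_{\mathcal C}$ commutes with every logical Pauli. The only difference is presentational: the paper phrases the second step as ``$E$ preserves the $\pm1$ eigenspaces of each logical $g_i$'' and leaves the passage to ``$E|_{\mathcal C}$ is a global phase'' implicit, whereas you make that final step explicit via irreducibility/Schur's lemma --- which is a cleaner way to close the argument.
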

\begin{proof}
 By the assumption, there exists a stabilizer $s_l$ such that $g_iE=Eg_is_l, \forall i$.
For $i=1,\dots,n-k$, since
\begin{eqnarray}
g_iE\ket{\bar{\psi}}=Eg_is_l\ket{\bar{\psi}}=E\ket{\bar{\psi}},
\end{eqnarray}
$E$ preserves the codeword space. Now for $i=n-k+1\dots n+k$, let $\ket{g_i^{(\pm)}}$ be the eigenstate of the logical operator $g_i$ with eigenvalue $\pm 1$, then
\begin{eqnarray}
g_i E\ket{g_i^{(\pm)}}=E g_i s_l\ket{g_i^{(\pm)}}=\pm E\ket{g_i^{(\pm)}}.
\end{eqnarray}
Thus, $E$ also preserves the logical space. 
\end{proof}
\noindent This theorem allows us to ignore the CZ errors that satisfy the above condition, which greatly reduces the computational task.

When intermediate error corrections are present, CZ errors need to be broken down according to the Pauli sum in the intermediate error corrections, and need to be propagated until the error correction at the end. 
If the number of intermediate corrections is zero or one, it is rather easy to deal with, because the number of error elements due to the CZ errors that need to be propagated until the end is limited.
Actually, except the pieceable 5-qubit code, all the CZ errors that do not satisfy the condition of Theorem~\ref{thm:harmless} were broken down upon measurement and tracked to see if they end with a logical error. 

For the 5-qubit code, to reduce the computational demand, we take the rule where we declare an error to be a logical error as soon as some CZ errors are measured in an intermediate error correction. 
Although this strategy would cause some overestimation of the logical error rate, we argue that the probability that CZ errors are measured in an intermediate error correction is rather small. 
CZ errors are measured in an intermediate error correction in the following two cases.
The first case is that an $X$ or $Y$-type error is caught by a CCZ gate in the adaptive nonconstant-stabilizer measurement. 
It is described in \cite{Yoder2016c} that the adaptive nonconstant-stabilizer measurement is only triggered when some constant stabilizer measurements click due to an $X$ or $Y$-type error only for a single code block.  
The adaptive measurement may contain CCZ gates connected between the ancilla block and the code blocks whose constant stabilizers did {\it not} click.   
Thus, an $X$ or $Y$-type error is caught by a CCZ gate in the adaptive measurement only when an $X$ or $Y$ error triggers the adaptive measurement, the constant measurement in different code blocks fail with $X$ or $Y$ type error, and it goes to a CCZ gate in the adaptive measurement.
The second case is that a CZ error is caught by a CNOT gate in the adaptive measurement.
Note that CZ error only happens when an $X$ or $Y$-type error propagates through the CCZ gates in the code blocks. CZ errors are then present in code blocks other than the one in which the $X$ or $Y$-type error exists. Also, CNOT gates in the adaptive measurement could be only applied to the code block whose constant stabilizers click. 
Thus, a CZ error is caught by a CNOT gate in the adaptive measurement only when an $X$ or $Y$-type error generates CZ errors in different code blocks, a later CCZ gate fails to cancel the first $X$ or $Y$-type error and generate another $X$ or $Y$-type error in the other code block that will make the constant measurement click, and the CZ error goes into CNOT gate in the adaptive measurement. 
These two cases are realized in very restricted situations, so the contribution to the total logical error rate from these cases would be rather small. 

Another situation arises with two or more intermediate corrections.
The pieceable construction on the 5-qubit code has multiple intermediate corrections, and they detect $X$ errors and notify possible error locations to the final error correction so that the final error correction can correct up to weight two located errors. 
However, multiple faults can cause two intermediate error corrections to incorrectly notify more than two locations to the final error correction.
We declare those elements to be logical errors.

\section{Details on the error polynomials}\label{app:error_polys}
Here, we describe how to obtain Eq.~\eq{fail_acc}-\eq{rej} from the exact counting.
We first consider Eq.~\eq{fail_acc}, the probability that one or two faults occur and that pattern is accepted by all the verification modules through the propagation, but ends up with a logical error.
Due to the fault-tolerant property, a single fault never causes a logical error. Thus, it suffices to consider the cases when two faults occur. 
In the simulation, each combination of two-fault patterns is assigned a probability $\left(\frac{p_{r}}{4^{r}-1}\right)\left(\frac{p_{s}}{4^{s}-1}\right)$ if the faulty components are an $r$-qubit gate and an $s$-qubit gate.
We propagate all the errors until the end and sum up the probabilities of the errors that lead to logical errors. 
During the propagation, these errors may encounter verification processes. 
If they are accepted by the verification, we keep propagating them. Otherwise, we stop propagating them so that they do not contribute to the logical error rate. 
Let $Q_{\rm fail, acc}$ denote the estimated logical error rate. Since each physical error rate is either $p_1, p_2$ or $p_3$, it looks like
\begin{equation}
Q_{\rm fail,acc}=\sum_{r=1}^3\sum_{s\geq r}^3F_{rs}^{(2)} p_r p_s.
\end{equation}
Let $n_r$ be the total number of $r$-qubit gates. 
Since we assume that different components fail independently, Eq.~\eq{fail_acc} is obtained as
\begin{equation}
 P_{\rm fail,acc}^{(2)}=\left[\Pi_{t=1}^3 (1-p_t)^{n_t}\right]\left(\sum_{r=1}^3\sum_{s\geq r}^3F_{rs}^{(2)} \frac{p_r}{1-p_r} \frac{p_s}{1-p_s} \right).
 \label{eq:fail_acc_formula}
\end{equation}

Similarly, $Q_{\rm succ, acc}$, the sum of the assigned probability of the patterns that are accepted by all the verification modules and do not cause a logical error, looks like
\begin{equation}
 Q_{\rm succ,acc}=\sum_{r=1}^3 S_{r}^{(1)}p_r+\sum_{r=1}^3\sum_{s\geq r}^3S_{rs}^{(2)} p_r p_s
\end{equation}
and Eq.~\eq{succ_acc} is obtained as
\begin{eqnarray}
 &&P_{\rm succ,acc}^{(2)}=\left[\Pi_{t=1}^3 (1-p_t)^{n_t}\right]\cdot \nonumber\\
&&\left(1+\sum_{r=1}^3 \frac{S_{r}^{(1)} p_r}{1-p_r}+\sum_{r=1}^3\sum_{s\geq r}^3 \frac{S_{rs}^{(2)} p_r p_s}{(1-p_r)(1-p_s)} \right)
\label{eq:succ_acc_formula}
\end{eqnarray}
The patterns that are not counted in either $Q_{\rm fail,acc}$ or $Q_{\rm succ,acc}$ are rejected in some verification module. Thus, we obtain Eq.~\eq{rej} as
\begin{eqnarray}
 &&P_{\rm rej}^{(2)}=\left[\Pi_{t=1}^3 (1-p_t)^{n_t}\right]\cdot \nonumber\\
&&\left(\sum_{r=1}^3 \frac{A_r^{(1)} p_r}{1-p_r}+\sum_{r=1}^3\sum_{s> r}^3 \frac{A^{(2)}_{rs} p_r p_s}{(1-p_r)(1-p_s)}\right)
\end{eqnarray}
where
\begin{eqnarray}
A^{(1)}_r&=&n_r-S_{r}^{(1)}\\
A^{(2)}_{rs}&=&
\begin{cases}
n_r n_s - F_{rs}^{(2)}-S_{rs}^{(2)} & (r\neq s)\\
\binom{n_r}{2} - F_{rr}^{(2)}-S_{rr}^{(2)} &(r=s)
\end{cases}
\end{eqnarray}

Special care is required for 5-qubit code because $n_r$ cannot be definitely determined because of the adaptive measurements.
Note that at most two adaptive measurements are triggered when one or two faults occur.
Thus, taking $n_r$ that includes two largest adaptive measurements, which are the ones with 13-CAT and 9-CAT, the lower bound in Eq.~\eq{bound} still holds.
Instead of Eq.~\eq{succ_acc_formula}, we take
\begin{eqnarray}
 &&P_{\rm succ,acc}^{(2)}=\left[\Pi_{t=1}^3 (1-p_t)^{n_t'}\right]\cdot \left(1+\sum_{r=1}^3 \frac{S_{r}^{(1)} p_r}{1-p_r}\right)\nonumber \\
 &&+\left[\Pi_{t=1}^3 (1-p_t)^{n_t}\right]\cdot\left(\sum_{r=1}^3\sum_{s\geq r}^3 \frac{S_{rs}^{(2)} p_r p_s}{(1-p_r)(1-p_s)} \right)
\end{eqnarray}
where $n_r'$ is the number of fault locations not including adaptive measurements.

For the 5-qubit code we also use
\begin{equation}
 P_{\rm acc}=\Pi_j P_{{\rm acc},j}=\Pi_j (1-P_{{\rm rej},j})<\Pi_{j'} (1-P_{{\rm rej},j'}^{(2)})
\end{equation}
where $j$ is taken over all the verification modules and $j'$ is taken over all the verification modules except adaptive measurements. 

The following are the obtained values for the parameters for each construction.
 
\begin{itemize}
\item $3\times 3$ Bacon-Shor
\begin{equation}
n_1=252,n_2=180,n_3=27
\end{equation}
\begin{equation}
S_{r}^{(1)}=(252,180,27)
\end{equation}
\begin{eqnarray}
F_{rs}^{(2)}&=&
 \begin{pmatrix}
  4216.8& 4271.9 & 783.5  \\
  & 1194.5 & 461.5 \\
  &  & 34.9
 \end{pmatrix}\\
S_{rs}^{(2)}&=&
 \begin{pmatrix}
  27409.2& 41088.1  & 6020.5\\
  & 14915.5 & 4398.5 \\
  &  & 316.1 
 \end{pmatrix}
\end{eqnarray}

\item Pieceable 7-qubit
\begin{equation}
n_1=648,n_2=480,n_3=21
\end{equation}
\begin{equation}
S_{r}^{(1)}=(383,224,21)
\end{equation}
\begin{eqnarray}
F_{rs}^{(2)}&=&
 \begin{pmatrix}
  13258.4& 12722.6 & 3581.4  \\
  & 3077.3 & 1855.3 \\
  &  & 176.7
 \end{pmatrix}\\
S_{rs}^{(2)}&=&
 \begin{pmatrix}
  56460.9& 68953.7  & 4461.6\\
  & 20748.8 & 2848.7 \\
  &  & 33.3 
 \end{pmatrix}
\end{eqnarray}

\item $3\times 9$ Bacon-Shor
\begin{equation}
n_1=2736,n_2=864,n_3=27
\end{equation}
\begin{equation}
S_{r}^{(1)}=(1524,566.4,27)
\end{equation}
\begin{eqnarray}
F_{rs}^{(2)}&=&
 \begin{pmatrix}
  52074 & 43098.4 & 7049.2  \\
  & 8663.0 & 2968.1 \\
  &  & 183.3
 \end{pmatrix}\\
S_{rs}^{(2)}&=&
 \begin{pmatrix}
  1013640 & 748636  & 34098.8\\
  & 138296 & 12324.7 \\
  &  & 167.7 
 \end{pmatrix}
\end{eqnarray}

\item Pieceable 5-qubit 
\begin{eqnarray}
n_1&=&3365,n_2=1228,n_3=41\\
n_1'&=&2967, n_2'=1152, n_3'=27
\end{eqnarray}
\begin{equation}
S_{r}^{(1)}=(1475,457.6.,27)
\end{equation}
\begin{eqnarray}
F_{rs}^{(2)}&=&
 \begin{pmatrix}
  113030.0 & 85261.6 & 14679.2  \\
  & 16067.4 & 5551.4 \\
  &  & 332.5
 \end{pmatrix}\\
S_{rs}^{(2)}&=&
 \begin{pmatrix}
  639301.0 & 482043.0  & 20392.4\\
  & 90716.0 & 7554.7 \\
  &  & 59.3 
 \end{pmatrix}
\end{eqnarray}

\item 7-qubit with magic state
\begin{equation}
n_1=1138,n_2=743,n_3=14
\end{equation}
\begin{equation}
S_{r}^{(1)}=(612,324.3,6.9)
\end{equation}
\begin{eqnarray}
F_{rs}^{(2)}&=&
 \begin{pmatrix}
  25436.5 & 24565.9 & 1078.5  \\
  & 6232.4 & 521.1 \\
  &  & 26.9
 \end{pmatrix}\\
S_{rs}^{(2)}&=&
 \begin{pmatrix}
  154650 & 166625  & 3921.4\\
  & 44308.3 & 2178.5 \\
  &  & 18.6 
 \end{pmatrix}
\end{eqnarray}
\end{itemize}
 
\section{Transformation matrix for volume calculation}\label{app:volume_calc}
As explained in the main text, the circuit volume for concatenated codes at higher concatenation level is described by a transformation matrix $A$ where $A_{ij}=N^{G_j}_{G_i}$. 
We show the matrices for pieceable $3\times 3$ Bacon-Shor code, pieceable 7-qubit code, and 7-qubit with magic state, which are denoted by $A_{pBS}$,$A_{p7}$, $A_{m7}$ respectively.
We take the following order for gates; {\bf G}=\{3-qubit gate, 2-qubit gate, single qubit gate, $\ket{0}$ and $\ket{+}$ preparation, $X$ basis and $Z$ basis measurement\}.
For preparation of $\ket{\bar{0}}$ and $\ket{\bar{+}}$ on 7-qubit code, we use the method proposed by Goto \cite{Goto2016}, which requires just one additional ancilla. 
\begin{equation*}
A_{pBS}=
 \begin{pmatrix}
 27 & 90 & 45 & 54 & 54 \\
 0 & 69 & 30 & 36 & 36 \\
 0 & 30 & 24 & 18 & 18 \\
 0 & 6 & 3 & 9 & 0 \\
 0 & 0 & 0 & 0 & 9 
 \end{pmatrix}
 \label{eq:pBS_trans}
\end{equation*}
\begin{equation*}
A_{p7}=
 \begin{pmatrix}
 21 & 162 & 240 & 72 & 72 \\
 0 & 79 & 104 & 32 & 32 \\
 0 & 36 & 59 & 16 & 16 \\
 0 & 11 & 22 & 8 & 1 \\
 0 & 0 & 0 & 0 & 7 
 \end{pmatrix}
 \label{eq:p7_trans}
\end{equation*}
\begin{equation*}
A_{m7}=
 \begin{pmatrix}
 14 & 267 & 504 & 136 & 136 \\
 0 & 79 & 104 & 32 & 32 \\
 0 & 36 & 59 & 16 & 16 \\
 0 & 11 & 22 & 8 & 1 \\
 0 & 0 & 0 & 0 & 7 
 \end{pmatrix}
 \label{eq:m7_trans}
\end{equation*}

\section{Detailed resource analysis for surface code}\label{app:surface_calc}
We describe the detailed resource analysis to implement logical Toffoli gate on the surface code.
There are mainly two ways to do it, synthesizing a Toffoli gate using Clifford gates and $T$ gates, and injecting a logical Toffoli state by gate teleportation.

Consider the first method, in the context of the Toffoli implementation proposed by Jones \cite{Jones2013} using four $T$ gates. The
$T$ gates are implemented by $\ket{T}$ state and gate teleportation where $\ket{T}$ state is purified by a distillation protocol. 
We use the 15-1 protocol~\cite{Bravyi2005a,Fowler2012a} which reduces error rates of $\ket{T}$ from $\oo(p)$ to $\oo(p^3)$, because it requires the smallest circuit volume compared to others \cite{Bravyi2012a,Meier2013,Reichardt2004a}.
Since the region of the physical error rate that pieceable construction helps to reduce error rate is $p<10^{-4}$ as can be seen in Fig.\fig{logical_error}, the logical error rate of the magic state distilled once is $<10^{-12}$.
Although the reduction in error rate may not be sufficiently low depending on the goal logical error rate, one distillation already gives large overheads. 
Thus, we consider the circuit volume for one distillation as a lower bound and proceed the discussion.
 
It may come as a surprise that other distillation protocols with better conversion rate between noisy magic state and purified magic state have larger circuit volume. 
It comes from that Hadamard gate and phase gate are not transversal on the surface code.
For implementing the Hadamard gate or phase gate fault-tolerantly, some non-trivial techniques, such as state injection, lattice surgery \cite{Horsman2012a}, code deformation \cite{Bombin2009a}, or surface folding \cite{Moussa2016}, are required. 
These take many surface code steps, which affect the circuit volume.
Even though conversion rate between noisy $T$ state and purified $T$ state is high, if it requires many costly Clifford gates, the circuit volume will be large. 
Especially in the case when only one distillation is required, a poor conversion rate does not hurt circuit volume that much.  

Let us analyze the number of surface code cycles and circuit volume for each gate that are necessary to implement the logical Toffoli gate. 
Let $C_G$ and $V_G$ be surface code cycles and circuit volume it takes to implement $G$.
We discuss circuit volume in units of [qubit$\cdot$cycle] and then convert it to [qubit$\cdot$step] using the fact that one surface code cycle consists of six steps \cite{Fowler2012a}.
Also, let $d$ be surface code distance, and $n=(2d-1)^2$ be the number of physical qubits on a surface.
Necessary components here are \{$\ket{\bar{0}}$ and $\ket{\bar{+}}$ preparation, CNOT, Hadamard, Phase\}.

For logical state preparation, we initialize a surface with physical $\ket{0}$ for $\ket{\bar{0}}$ preparation, and $\ket{+}$ for $\ket{\bar{+}}$ preparation.
After $d$ rounds of error correction, an appropriate recovery can be determined to prepare the desired logical state fault-tolerantly. 
Thus, we find $C_{prep}=d$, $V_{prep}=nd$. 

The CNOT gate can be transversally implemented if we allow non-locality or a 3D layered architecture. 
However, since one of the striking features of surface codes is local interactions in a 2D architecture, we use lattice surgery to implement the CNOT gate \cite{Horsman2012a}. 
First, prepare a surface with $\ket{\bar{+}}$ state between the control surface and the target surfaces. 
The control surface and the intermediate surface are merged while obtaining measurement syndromes.
This corresponds to $\bar{Z}\bar{Z}$ measurement.  
After that, the surface is split into two original surfaces and the intermediate surface is merged to target surface, which corresponds to $\bar{X}\bar{X}$ measurement. 
It ends with splitting it into the two original surfaces. 
Since merger and splitting each take $d$ rounds of error correction to stabilize the surface,
\begin{equation}
C_{CNOT}=C_{prep}+4d=5d
\end{equation}
and 
\begin{eqnarray}
V_{CNOT}&=&V_{prep}+(3n+2(2d-1))(C_{prep}+4d)\nonumber \\
&=& 6 d - 44 d^2 + 64 d^3.
\end{eqnarray}

The Hadamard gate is also implemented by the lattice surgery. 
In the lattice surgery technique, firstly Hadamard gates are applied transversally. 
To correct the orientation of the boundary, additional qubits are merged to the boundary and some qubits are split out so that it restores the original boundary orientation. The protocol ends with moving the surface back to the original position. 
It takes $d$ cycles to stabilize the original surface after applying transversal $H$, $d$ cycles for lattice merger, $d$ cycles for lattice splitting, and $d$ cycles for SWAP operations to move the lattice back to the original position.  Thus, $C_H=4d$. 
For circuit volume, we need a bigger surface to carry out merger and split by one more column and row of qubits. Thus, $V_H=(2d)^2 C_H=16d^3$.

For implementing phase gate, we use the circuit in Fig.\fig{S_synthesis}. 
A good thing about this circuit is that the ancilla state $\ket{S}=S\ket{+}=(\ket{0}+i\ket{1})/\sqrt{2}$ is preserved.
Thus, once a purified $\ket{S}$ state is prepared at the beginning of the computation, it can be reused whenever a phase gate needs to be applied. After averaging over a whole computation, the volume use for the distillation process at the beginning will be negligible per one logical gate construction. Note that if only local interactions are allowed, it may take additional circuit volume when the qubit to which the phase gate should be applied is far from the stored $\ket{S}$ state. 
Thus, our estimation should be considered as a lower bound of the actual circuit volume under the setting in which only local interactions are allowed.  
It gives $C_S=2C_{CNOT}+2C_H=18d$ and $V_S=2V_{CNOT}+2V_H+2nC_H=20 d - 120 d^2 + 192 d^3$.

Combining these building blocks, we find the number of cycles and volume required to implement a $T$ gate and a Toffoli gate.

For distilling a $T$ state, $\ket{T}=T\ket{+}$, we use the circuit in \cite{Fowler2012a} which takes 15 $\ket{T}$ states and output 1 $\ket{T}$ with lower error rate.  
It takes 7 surface code cycles for CNOTs and 2 steps for transversal $T$ and measurements, which is 1/4 surface code cycle. Ignoring the last 1/4 cycles, we get $C_{\ket{T}}=7C_{CNOT}=35d$. 
With some parallelization, we get $V_{\ket{T}}=16V_{prep}+V_{CNOT}+7(5V_{CNOT}+6nC_{CNOT})+\frac{1}{4}\cdot16n d=446 d - 2504 d^2 + 3224 d^3$

For implementing a $T$ gate, we use the usual gate teleportation technique \cite{Nielsen2010}.
The $S$ gate correction is applied with probability $1/2$.
We get $C_T = C_{CNOT}+\frac{1}{2}C_S=14d$ and $V_T =V_{\ket{T}}+V_{CNOT}+\frac{1}{2}V_S=462 d - 2608 d^2 + 3384 d^3$.
Since the surface code is CSS, we can transversally make measurements on all the data qubits, and extract eigenvalue for measurement operator. 
Thus, measurement is done with only one time step, which is 1/8 of one surface code cycle, we ignore the volume due to the measurement.

Toffoli gate synthesis in \cite{Jones2013} consists of two steps. 
In the first part, one constructs the $\mbox{Toffoli}^*$ gate, which is Toffoli gate followed by controlled-$S^{\dagger}$ gate, where four $T$ gates and two $H$ gates are used. Also, note that one logical ancilla block is used. 
The second part takes the $\mbox{Toffoli}^*$ gate to the usual Toffoli gate with help of one additional ancilla block. 
By construction of the synthesis circuit, we get
\begin{equation}
C_{\mbox{Toffoli}^*}=2C_H+4C_{CNOT}+C_T=42d
\end{equation}
and
\begin{eqnarray}
V_{\mbox{Toffoli}^*} &=&V_{prep}+6n C_H+2V_H+8V_{CNOT}+4 V_T\nonumber \\
&=&1921 d - 10884 d^2 + 14180 d^3
\end{eqnarray}
Second part of the circuit gives
\begin{equation}
C_{\mbox{Toffoli}}=C_{\mbox{Toffoli}^*}+C_{S}+C_{CNOT}+C_{H}+C_{\ket{T}}=104d
\end{equation}
and 
\begin{eqnarray}
V_{\mbox{Toffoli}}&=& V_{\mbox{Toffoli}^*}+V_{prep}+nC_{\mbox{Toffoli}^*}+V_S+V_{CNOT}\nonumber \\
&+&V_H +3n(C_S+C_{CNOT}+C_H)\nonumber \\
&+&n(C_{\mbox{Toffoli}^*}+C_{CNOT}+C_H)\nonumber \\
&=& 2118 d - 11732 d^2 + 15136 d^3
\label{eq:VTof_1st}
\end{eqnarray}
where the unit for the volume is [qubit $\cdot$ cycle].
We included $C_{\ket{T}}$ in $C_{\mbox{Toffoli}}$ because cycles in the distillation circuit also contribute increasing in the final logical error rate. 
Note that it includes the ancilla qubits for keeping $\ket{S}$ state that is kept during the whole computation. 

Another way to implement logical Toffoli gate on the surface code is to use Toffoli state. 
To locally prepare the Toffoli state, we use the protocol that takes eight $\ket{H}$ states and outputs one Toffoli state \cite{Eastin2013}.
In the preparation circuit, there are two $Y(\pi/4)$ gates and four $Y(-\pi/4)$ gates, which are rotations with respect to $Y$ axis. 
These gates are implemented using $\ket{H}$ state with $Y$ basis measurement, controlled-$Y$ gate, and $Y(\pm \pi/2)$ gate. To implement these gates on the surface code, we use phase gates and Hadamard gates to rotate them to $X$ basis measurement, CNOT gate, and phase gate. 
We then obtain 
\begin{eqnarray}
C_{Y(\pi/4)}&=&C_S+C_{CNOT}+C_S+(2C_H+C_S)/2\nonumber \\
&=&54d
\end{eqnarray}
and 
\begin{eqnarray}
V_{Y(\pi/4)}&=& V_{prep}+V_S+V_{CNOT}+2V_S+(2V_H+V_S)/2\nonumber \\
&=&77 d - 468 d^2 + 756 d^3
\end{eqnarray}
Using these, we obtain 
\begin{eqnarray}
C_{\ket{\mbox{Toffoli}}}&=&7C_{CNOT} + (15/2)C_S + 3C_H\nonumber \\
&=&182d
\end{eqnarray} 
and 
\begin{eqnarray}
V_{\ket{\mbox{Toffoli}}}&=&4V_{prep} + 3(2V_{CNOT} + 2 V_{Y(\pi/4)} + 2nC_{Y(\pi/4)})\nonumber \\ 
&&+ V_{CNOT} + 2nC_{CNOT}\nonumber \\
&=&842 d - 4468 d^2 + 6336 d^3
\end{eqnarray}
where $\ket{\mbox{Toffoli}}$ refers to the Toffoli state.
Cycles and volume for the teleportation circuit, which we write $C_{tele}$ and $V_{tele}$ are
\begin{eqnarray}
C_{tele}&=&C_{CNOT}+1/2(3C_{CNOT}+2C_H)\nonumber\\
&=&16.5d
\end{eqnarray}
\begin{eqnarray}
V_{tele}&=&3V_{CNOT}+\{4nC_H+ 3(V_{CNOT}+nC_{CNOT})\}/2 \nonumber \\
&=&42.5 d - 260 d^2 + 350 d^3
\end{eqnarray}
Combining all of them, we get 
\begin{eqnarray}
C_{\mbox{Toffoli}}=198.5d
\end{eqnarray}
and
\begin{eqnarray}
V_{\mbox{Toffoli}}=7076 d - 37824 d^2 + 53488 d^3
\label{eq:VTof_2nd}
\end{eqnarray}
where the unit for the volume is [qubit $\cdot$ cycle].
Fig.~\fig{volume_Tofstate_T} shows circuit volume with unit [qubit$\cdot$step] in terms of code distance for both ways of implementation. 
We can see that the the scheme with Toffoli state has lower circuit volume.
It is the reason why the scheme with Toffoli state is discussed in the main text.
 
 \begin{figure}[htbp] 
\begin{center}
\includegraphics[scale=1.1]{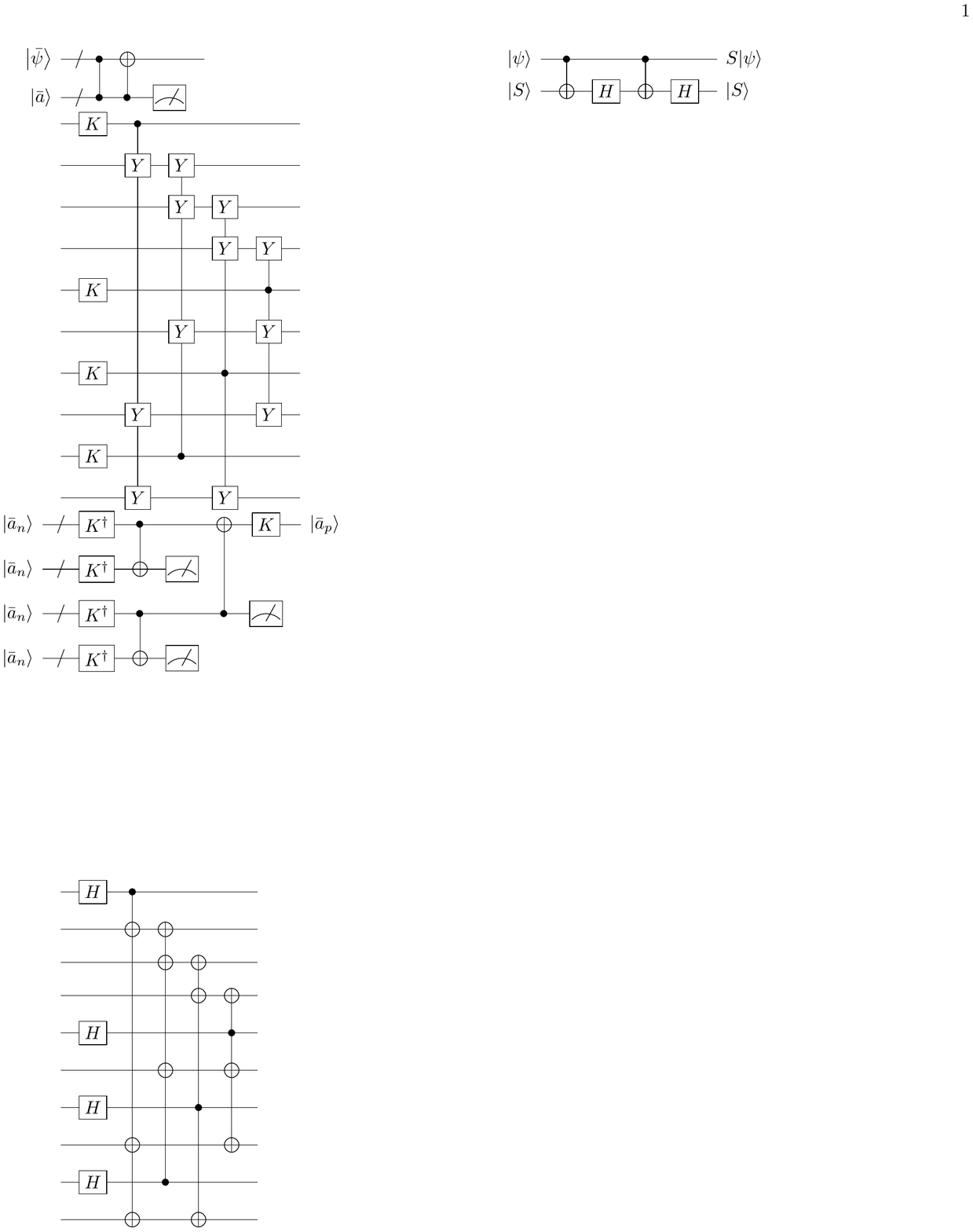}
\caption{Circuit identity used for implementing $S$ gate. Here $\ket{S}=S\ket{+}=(\ket{0}+i\ket{1})/\sqrt{2}$.}
\label{fig:S_synthesis}
\end{center}  
\end{figure}

 \begin{figure}[htbp] 
\begin{center}
\includegraphics[scale=0.3]{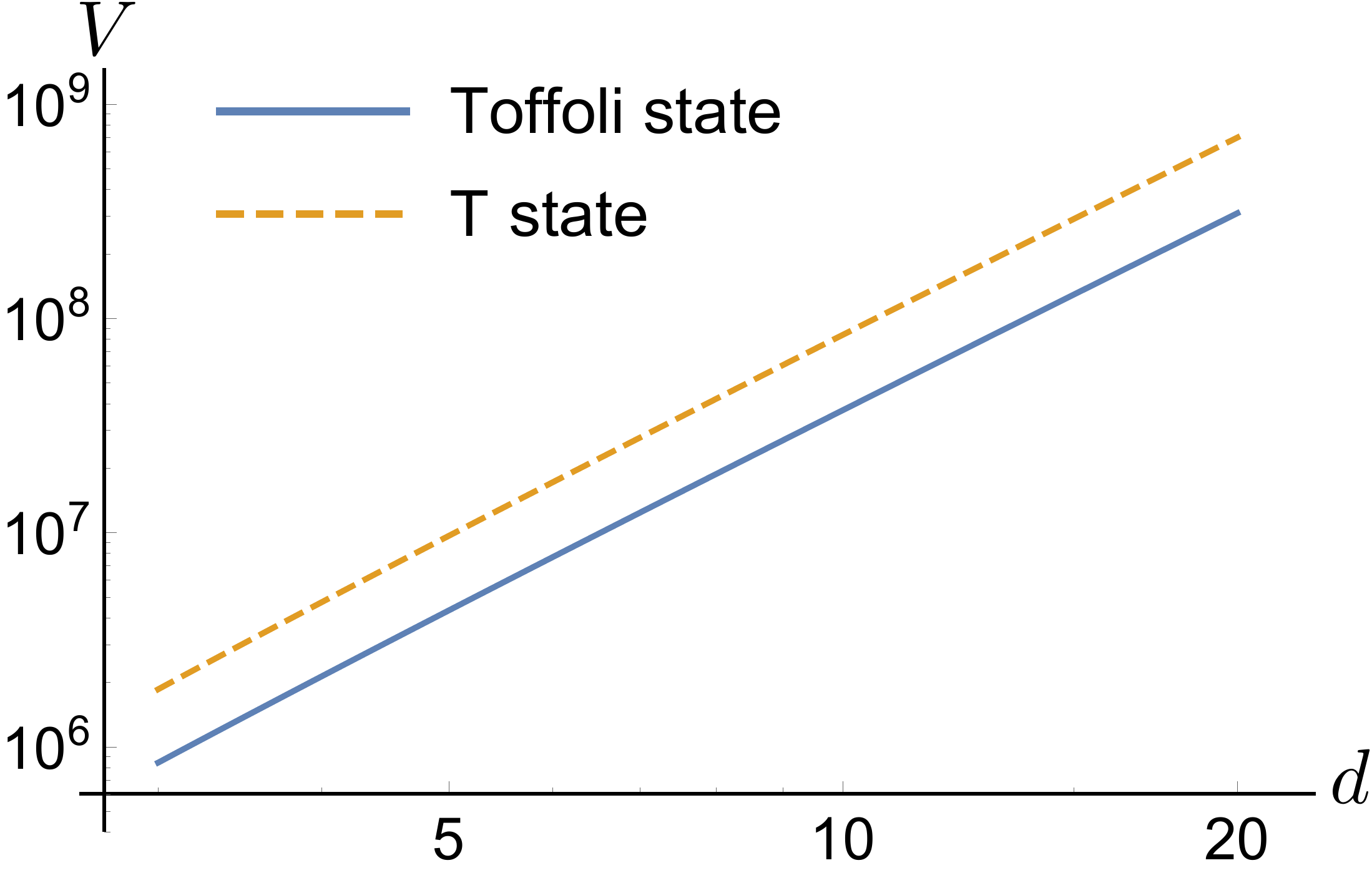}
\caption{Circuit volume for two different implementations of Toffoli gate. Dashed: gate synthesis using $T$ gate. Solid: Toffoli state scheme}
\label{fig:volume_Tofstate_T}
\end{center}  
\end{figure}

\clearpage

\bibliographystyle{apsrev4-1}

%

\end{document}